\colorlet{green}{green!50!black} 
\newcommand{\myproblem}[1]{{\textsc{#1}}} 
\def\OPT{\textup{OPT}}
\def\bp{\textup{bp}}
\def\S{\mathcal{S}}
\def\FamML{\mathcal{F}_{\textup{ML}}}
\def\NP{\mathsf{NP}}
\def\I{\mathcal{I}}
\def\J{\mathcal{J}}
\def\ml{\textup{ml}}
\def\top{\textup{top}}
\def\obj{\textup{obj}}
\def\OPT{\textup{OPT}}
\def\MUPMU{\myproblem{Max-Utility Popular Matching with Instability Costs}}
\def\FAS{\myproblem{Feedback Arc Set}}
\def\HS{\myproblem{Hitting Set}}
\def\distswap{\Delta^\textup{swap}}
\def\distedge{\Delta^\textup{edge}}
\def\distvert{\Delta^\textup{vert}}
\newtheorem{observation}[theorem]{Observation}
\newtheorem{mycorollary}[theorem]{Corollary}
\newtheorem{mylemma}[theorem]{Lemma}
\title{Recognizing when a preference system is close to admitting a master list}
\author{Ildik\'{o} Schlotter\inst{1,2}}
\institute{
 Centre for Economic and Regional Studies, Budapest, Hungary; \email{schlotter.ildiko@krtk.hu} 
\and
Budapest University of Technology and Economics, Budapest, Hungary
}
\begin{document}

\maketitle

\begin{abstract}
A preference system~$\I$ is an undirected graph where vertices have preferences over their neighbors, 
and $\I$ admits a master list if all preferences can be derived from a single ordering over all vertices.
We study the problem of deciding whether a given preference system~$\I$ is \emph{close to} admitting a master list
based on three different distance measures. We determine the computational complexity of the following questions: 
can $\I$ be modified by (i) $k$ swaps in the preferences, (ii) $k$ edge deletions, 
or (iii)  $k$ vertex deletions 
so that the resulting instance admits a master list? 
We investigate these problems in detail from the viewpoint of parameterized complexity and of approximation.
We also present two applications related to stable and popular matchings. 
\end{abstract}

\section{Introduction}
\label{sec:intro}

A preference system models a set of agents as an undirected graph where agents are vertices, 
and each agent has preferences over its neighbors. 
Preference systems are a fundamental concept in the area of matching under preferences which,
originating in the seminal work of Gale and Shapley~\cite{gale-shapley} on stable matchings, 
is a prominent research field in the intersection of algorithm design and computational social choice
that has steadily gained attention over the last two decades.

Preference systems may admit a master list, that is, a global ranking over all agents from which agents derive their preferences.
Master lists arise naturally in many practical scenarios such as P2P networks~\cite{GLMMRV07,LMVGRM07}, 
job markets~\cite{IMS08}, and student housing assignments~\cite{PPR08}.
Consequently, master lists and its generalizations have been the focus of research in 
several papers~\cite{IMS08,KW16,Kam19,BHK+20,MR20,CR21,Chowdhury22}.

In this work we aim to investigate the computational complexity of recognizing preference systems that are \emph{close to} admitting a master list. 
Such instances may arise as a result of noise in the data set, 
or in scenarios where a global ranking of agents is used in general, with the exception of a few anomalies.

\subsection{Our contribution}
We introduce three measures to describe the distance of a given preference system~$\I$ from the class of preference systems
admitting a master list. The first measure, $\distswap(\I)$ is based on the swap distance between agents' preferences, 
while the measures~$\distedge(\I)$ and $\distvert(\I)$ are based on classic graph operations,  
the deletion of edges or vertices; precise definitions follow in Section~\ref{sec:prelim}.
We study in detail the complexity of computing these values for a given preference system~$\I$.
After proving that computing any of these three measures is $\NP$-hard, 
we apply the framework of parameterized complexity and of approximation algorithms to gain a more fine-grained insight.

In addition to the problems of computing $\distswap(\I)$, $\distedge(\I)$, and $\distvert(\I)$, 
we briefly look at two applications.
First, we show that if a strict preference system~$\I$ is close to admitting a master list, 
then we can bound the number of stable matchings as a function of the given distance measure. 
This yields an efficient way to solve a wide range of stable matching problems in instances that are close to admitting a master list. 
Second, we consider an optimization problem over popular matchings where the task is to find a maximum-utility popular matching
while keeping the number (or cost) of blocking edges low.
We prove that this notoriously hard problem can be efficiently solved 
if preferences are close to admitting a master list. 
In both of these applications, the running time of the obtained algorithms heavily depends on the distance measure used.

\subsection{Related work}
Master lists have been extensively studied in the context of stable matchings~\cite{IMS08,Kam19,BHK+20,Chowdhury22}. 
Various models have been introduced in the literature to generalize master lists, and 
capture preferences that are similar to each other in some sense.
Closest to our work might be the paper by 
Bredereck et al.~\cite{BHK+20}
who examine the complexity of multidimensional stable matching problems on instances that are, in some sense, close to admitting a master list. 
Abraham et al. investigated a setting where agent pairs are ranked globally~\cite{ALMO08}. 
Bhatnagar et al.~\cite{BGR08} examined three restrictions on preference systems: 
the $k$-attribute model where agents are evaluated through a linear function of $k$ attributes, 
the $k$-range model where for each agent~$p$ on one side of a bipartite preference system the rankings of~$p$ 
by the agents on the other side falls into a range of size~$k$, and 
the $k$-list model where agents can be partitioned into $k$ types, 
and agents of the same type have identical preferences.
Meeks and Rastegari studied hard problems under a variant of the $k$-list model~\cite{MR20}. 
Khanchandani and Wattenhofer considered a distributed stable matching problem in a bipartite $k$-range model~\cite{KW16}.
Cheng and Rosenbaum~\cite{CR21} examined the expressive powers of the $k$-range, $k$-attribute, and a bipartite version of the $k$-list model
in terms of the rotation poset realized by such preference systems. 

Restricted preference profiles have been also extensively studied in the broader context of computational social choice;
see the survey by Elkind et al.~\cite{ELP17}. 
In election systems, computing the Kemeny score~\cite{Kemeny59} for a multiset of votes 
(where each vote is a total linear order over a set of candidates)
is analogous to computing the value~$\distswap(\I)$ for a preference system~$\I$, 
although there are some differences between these two problems. 
Apart from the literature on the computational complexity of computing the Kemeny score (see e.g.~\cite{BTT89,HSV05,BFGNR09,FH21}),
our work also relates to the problem of computing certain distance measures between elections
which has been investigated by Boehmer et al.~\cite{BFNSW22}.
Distance from a strongly unanimous profile (where all votes coincide) has been considered as a 
parameter in the study by Gupta et al. on committee selection~\cite{GJS20}.

\subsubsection{Organization.}
We start with the preliminaries in Section~\ref{sec:prelim},
and define our three distance measures and the corresponding problems in Section~\ref{sec:initial_results}.
Section~\ref{sec:main-res} contains our main results, first for strict and then for weak preferences, 
and we present our two applications in Section~\ref{sec:applic}. We conclude with a short summary 
and some suggestions for future research in Section~\ref{sec:conclusion}.

\section{Preliminaries}
\label{sec:prelim}

Let us now introduce our notation, giving all necessary definition about directed
and undirected graphs, preference systems, and certain concepts of computational complexity theory.

\subsection{Graph-theoretic notions}

We define most of the basic graph-related concepts we need for directed graphs,
as the corresponding concepts for undirected graphs are usually defined analogously.
We use the notation $[n]=\{1,\dots,n\}$ for any $n \in \mathbb{Z}^+$.

\medskip 
\noindent
{\bf Directed graphs.}
For a directed graph (or \emph{digraph})~$D$, we denote by $V(D)$ and $A(D)$ its vertex and arc set, respectively; 
each arc is an ordered pair of vertices.
For some vertex~$v \in V(G)$ we let $N^-_D(v)$ and $N^+_D(v)$ denote 
the set containing $v$'s \emph{in-} and \emph{out-neighbors} in~$D$, respectively. 
A \emph{loop} is an arc~$(v,v)$ for some~$v \in V(D)$, 
and two arcs are \emph{parallel} if they both point from~$a$ to~$b$ for some $a,b \in V(D)$.
In this paper, 
parallel arcs will be labelled, and we will be able to identify them via their labels.
A \emph{source} in~$D$ is a vertex with in-degree~0, and a \emph{sink} is one with out-degree~0.
A \emph{path} in~$D$ is a series $v_1, \dots, v_p$ of distinct vertices such that $(v_i,v_{i+1}) \in A(D)$ for each $i \in [p-1]$;
if additionally, $(v_p,v_1)$ is also an arc in~$D$, then these vertices form a \emph{cycle} in~$D$.
We say that~$D$ is acyclic, if it contains no cycles. The vertices of an acyclic digraph can be listed in a \emph{topological order}
which is a strict order $v_1, \dots, v_n$ of the vertices of~$D$ such that any arc~$(v_i,v_j) \in A(D)$ satisfies $i <j$. 

\medskip
\noindent
{\bf Graph operations.}
A \emph{subgraph} of $D$ is a graph obtained by deleting some edges and vertices from~$D$. 
For a set~$X$ of edges or vertices in~$D$, we let $D-X$ denote the subgraph of~$D$ 
obtained by deleting~$X$ from~$D$. For a set~$W \subseteq V(D)$ of vertices, we let $D[W]=D-(V(D)\setminus W)$ denote the subgraph \emph{induced} by~$W$. 
We may interpret paths or cycles in~$G$ also as subgraphs of~$G$. 
By \emph{subdividing} an arc~$(u,v)$ we mean replacing~$(u,v)$ with two new arcs~$(u,x)$ and $(x,v)$ 
where $x$ is a newly introduced vertex; the inverse operation is \emph{lifting} the vertex $x$ from the path~$(u,x,v)$.
By \emph{contracting} a subgraph $D'$ of~$D$ we mean the following: we replace $V(D')$ with a new vertex $x$, 
and for each arc~$e \in A(D)$ we replace any endpoint of~$e$ contained in~$D'$ by~$x$; if any loops are created in the process, we delete them.

\medskip 
\noindent
{\bf Undirected graphs.}
Given an undirected graph~$G$, we let $V(G)$ and $E(G)$ denote its vertex and edge set, respectively;
each edge is a set of two of vertices.
For some vertex~$v \in V(G)$ we let $N_G(v)$ denote the set containing $v$'s \emph{neighbors} in~$G$. 
All other concepts defined in the previous two paragraphs are defined analogously for undirected graphs.
A \emph{clique} is a set $K \subseteq V(G)$ of vertices such that there is an edge between each two vertices of~$K$. 
The \emph{bidirected} version of~$G$ is obtained by replacing each edge~$\{u,v\} \in E(G)$ by two arcs~$(u,v)$ and $(v,u)$. 

\subsection{Preference systems}

A \emph{preference system} is defined as a pair $\I=(G,\preceq)$ where $G$ is an undirected graph and $\preceq=\{\preceq_v:v \in V(G)\}$ 
where $\preceq_v$ is a weak or a strict order over~$N_G(v)$ for each vertex~$v \in V(G)$, indicating the \emph{preferences} of~$v$.
For some $v \in V(G)$ and $a,b \in N_G(v)$, we say that $v$ \emph{prefers}~$b$ to~$a$, 
denoted by $a \prec_v b$, if $a \not \preceq_v b$.
We write $a \sim_v b$, if $a \preceq_v b$ and $b \preceq_v a$.
A \emph{tie} in~$v$'s preferences is a maximal set~$T \subseteq N_G(v)$ such that $t \sim_v t'$ for each~$t$ and~$t'$ in~$T$.
If each tie has size~1, then $\I$ is a \emph{strict preference system}, 
and we may denote it by $(G,\prec)$. 

\medskip
\noindent
{\bf Deletions and Swaps.}
For a set~$X$ of edges or vertices in~$G$, let 
$\I-X$ denote the preference system whose underlying graph is~$G-X$ and 
where the preferences of each vertex $v \in V(G-X)$ is the restriction of~$\preceq_v$ to $N_{G-X}(v)$.
We may refer to~$\I-X$ as a sub-instance of~$\I$.

If vertex $v$ has strict preferences $\prec_v$ in~$\I$, then a \emph{swap} is a triple $(a,b;v)$ with~$a,b \in N_G(v)$, 
and it is \emph{admissible} if $a$ and $b$ are consecutive\footnote{Vertices $a$ and $b$ are consecutive in $v$'s preferences, if either $a \prec_v b$ but there is no vertex~$c$ with $a \prec_v c \prec_v b$, or $b \prec_v a$ but there is no vertex~$c$ with $b \prec_v c \prec_v a$.} in~$v$'s preferences.
\emph{Performing} an admissible swap~$(a,b;v)$ in~$\I$ means switching $a$ and $b$ in~$v$'s preferences; 
the resulting preference system is denoted by~$\I \lhd (a,b;v)$. 
For a set~$S$ of swaps, $\I \lhd S$ denotes the preference system obtained by performing the swaps in~$S$ in~$\I$ 
in an arbitrary order as long as each swap is admissible (if this is not possible, $\I \lhd S$ is undefined).
For non-strict preferences, similar notions will be discussed in Section~\ref{sec:initial_results}.

\medskip
\noindent
{\bf Master Lists.}
A weak or strict order $\preceq^\ml$ over $V(G)$ is a \emph{master list} for $(G,\preceq)$, if for each $v \in V(G)$, 
the preferences of~$v$ are \emph{consistent} with~$\preceq^\ml$, that is, $\preceq_v$ is the restriction of~$\preceq^\ml$ to~$N_G(v)$. 
We will denote by $\FamML$ the family of those preference systems that admit a master list.
Notice that $\FamML$ is closed under taking subgraphs: if we
delete a vertex or an edge from a preference system in~$\FamML$, the remainder still admits a master list.

\medskip
\noindent
{\bf Stable and popular matchings.}
A \emph{matching} in a preference system~$\I$ is a set of edges such that no two of them share an endpoint.
For an edge $\{a,b\}$ in~$M$, we say that $M$ \emph{matches} $a$ to~$b$ and vice versa, and we denote this by setting
$M(a)=b$ and $M(b)=a$. A vertex~$v$ in~$G$ is \emph{unmatched} in~$M$ if no edge of~$M$ has~$v$ as its endpoint.
A blocking edge for~$M$ is an edge $\{a,b\}$ of~$G$ such that (i)~either $a$ is unmatched in~$M$ or $M(a) \prec_a b$, 
and (ii)  either $b$ is unmatched in~$M$ or $M(b) \prec_b a$.
The set of blocking edges for~$M$ is denoted by~$\bp(M)$, and $M$ is \emph{stable} if $\bp(M)=\emptyset$.
For a preference system~$\I$, we will denote by~$\mathcal{S}(\I)$ the set of stable matchings in~$\I$.

We say that a vertex~$v$ prefers some matching~$M$ over another matching~$M'$, if either $v$ is unmatched in~$M'$ but matched in~$M$, or
it is matched in both but $M'(v) \prec_v M(v)$.
For two matchings $M$ and $M'$ in~$\I$, we say that $M$ is \emph{more popular than} $M'$, 
if the number of vertices that prefer $M$ to~$M'$ is more than the number of vertices preferring~$M'$ to~$M$. 
A matching~$M$ in \emph{popular}  in~$\I$, if there is no matching more popular than~$M$.

\subsection{Computational complexity}

We assume that the reader is familiar with the basic concepts and tools of
classic computational complexity theory, such as $\NP$-hardness and polynomial-time reductions.

\medskip
\noindent
{\bf Parameterized complexity.}
In the framework of parameterized complexity, each problem instance~$\I$ is associated with an integer~$k$,
and the aim is to find algorithms whose running time is $f(k) |\I|^{O(1)}$ for some computable function~$f$;
such an algorithm is called \emph{fixed-parameter tractable} (or FPT) with parameter~$k$. 
The basic complexity class in the parameterized framework is $\mathsf{W}[1]$, and if a parameterized problem  is 
proved to be $\mathsf{W}[1]$-hard, then this is considered a strong evidence that it is not FPT with the given parameterization. 
To show that a given parameterized problem $\mathcal{P}$ is $\mathsf{W}[1]$-hard, it suffices to give a \emph{parameterized} or \emph{FPT reduction} 
from another $\mathsf{W}[1]$-hard problem~$\mathcal{Q}$, 
which is a function $f$ that for each instance $(\I,k)$ of the problem~$\mathcal{Q}$ 
computes in FPT time an equivalent instance $(\I',k')$ of~$\mathcal{P}$ such that $k' \leq g(k)$ for some function~$g$.

\medskip
\noindent
{\bf Approximation.}
For a minimization problem $\mathcal{P}$, let $\obj_{\mathcal{P}}$ denote its objective function,
and let $\textup{OPT}_{\mathcal{P}}(\I)=\obj_\I(S)$ denote the value of an optimal solution~$S$ 
for an instance~$\I$ of~$\mathcal{P}$.
For some $c \geq 1$, an algorithm is a \emph{$c$-approximation} or an approximation with \emph{factor}~$c$ for $\mathcal{P}$,
if for any instance~$\I$ of~$\mathcal{P}$ it returns a solution~$S$ for~$\I$ with $\obj_\mathcal{P}(S) \leq c \cdot \OPT_{\mathcal{P}}(I)$.
For two minimization problems~$\mathcal{P}_1$ and~$\mathcal{P}_2$, 
an \emph{approximation preserving reduction}~\cite[Section~A.3.1]{Vazirani-book}
from~$\mathcal{P}_1$ to~$\mathcal{P}_2$
consists of two functions~$f$ and~$g$ computable in polynomial time such that
\begin{itemize}
\item[(i)] for any instance $\I_1$ of $\mathcal{P}_1$, function~$f$ yields an instance $\I_2=f(\I_1)$ of $\mathcal{P}_2$ 
with $\OPT_{\mathcal{P}_2}(\I_2) \leq \OPT_{\mathcal{P}_1}(\I_1)$, and 
\item[(ii)] for any solution $S_2$ of~$\I_2$, function~$g$ yields a solution~$S_1=g(S_2)$ for~$\I_1$ 
with~$\obj_{\I_1}(S_1) \leq \obj_{\I_2}(S_2)$.
\end{itemize}
Given such a reduction, an approximation algorithm~$\mathcal{A}$ for~$\mathcal{P}_2$ with factor~$c$ 
yields an approximation algorithm for~$\mathcal{P}_1$ with the same factor.

\smallskip 
Some of our intractability results hold under certain standard complexity-theoretic assumptions such as~$\mathsf{W}[1]\neq \mathsf{FPT}$
or the so-called \emph{Unique Games Conjecture}~\cite{Khot-UCG-02} which we will use in our paper without giving its precise definition.

For more on these topics, we refer the reader to the books~\cite{garey-johnson-book,downey-fellows-FPC-book,CyganEtAl2015,Vazirani-book}.

\section{Problem definition and initial results}
\label{sec:initial_results}

In Section~\ref{sec:characterization} we introduce the notion of a \emph{preference digraph}, 
a directed graph associated with a given preference system, which can 
be exploited to obtain a useful characterization of preference systems that admit a master list. 
In Section~\ref{sec:dist-def} we define our three measures for describing the distance 
of a preference system from~$\FamML$, and observe a simple fact regarding the relationship between these distances.


\subsection{Characterization of $\FamML$ through the preference diraph}
\label{sec:characterization}
%
With a strict preference system $\I=(G,\prec)$ where $G=(V,E)$, we associate an arc-labelled directed graph~$D_\I$ 
that we call the \emph{preference digraph} of $\I$. We let~$D_\I$ have the same set of vertices as~$G$, 
and we define the arcs in~$D_\I$ by adding an arc $(a,b)$ labelled with~$v$ whenever $a \prec_v b$ holds for some vertices~$a,b$ and $v$ in~$V$.
Note that several parallel arcs may point from~$a$ to~$b$ in~$D_\I$, each having a different label, 
so we have $|V(D_\I)|=|V|$ but $|A(D_\I)|=O(|V|\!\cdot\!|E|)$.
Observation~\ref{obs:pref-digraph} immediately follows from the fact that acyclic digraphs admit a topological order.

\begin{observation}
\label{obs:pref-digraph}
A strict preference system $(G,\prec)$ admits a master list if and only if the preference digraph of~$G$ is acyclic.
\end{observation}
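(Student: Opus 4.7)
The plan is to prove both directions by directly translating between master lists and topological orders of the preference digraph, exploiting the definition of $D_\I$ as essentially encoding every pairwise comparison that appears in some agent's preference.

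For the forward direction, I would assume that $(G,\prec)$ admits a master list $\prec^{\ml}$ and show that $D_\I$ is acyclic. By the definition of a master list, each $\prec_v$ is the restriction of $\prec^{\ml}$ to $N_G(v)$, so whenever $D_\I$ contains an arc $(a,b)$ with some label $v$, we have $a \prec_v b$ and therefore $a \prec^{\ml} b$. Consequently, any directed cycle $v_1,\dots,v_p,v_1$ in $D_\I$ would yield $v_1 \prec^{\ml} v_2 \prec^{\ml} \dots \prec^{\ml} v_p \prec^{\ml} v_1$, contradicting that $\prec^{\ml}$ is a strict order. Hence $D_\I$ is acyclic.

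For the reverse direction, I would assume $D_\I$ is acyclic and construct a master list from a topological order $v_1,\dots,v_n$ of $V(D_\I)=V(G)$, defining $\prec^{\ml}$ by $v_i \prec^{\ml} v_j$ iff $i<j$. To show $\prec^{\ml}$ is a master list, pick any $v \in V(G)$ and any $a,b \in N_G(v)$; I need $a \prec_v b \Leftrightarrow a \prec^{\ml} b$. If $a \prec_v b$, then by construction $D_\I$ has an arc $(a,b)$ labelled $v$, so $a$ precedes $b$ in the topological order, giving $a \prec^{\ml} b$. Since preferences are strict and $\prec^{\ml}$ is a strict order, the converse follows by contraposition on the swapped pair.

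I do not anticipate any serious obstacle: the main content is simply matching up the two definitions, and the existence of a topological order for any acyclic digraph is exactly the bridge that the excerpt explicitly points out. The only detail that needs a moment of care is that $D_\I$ may carry parallel labelled arcs between the same pair of vertices, but since we only care about reachability/cyclicity, labels can be ignored when invoking topological order.
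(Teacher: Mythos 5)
Your proof is correct and matches the paper's intended argument: the paper treats this observation as immediate from the existence of topological orders for acyclic digraphs, and your two directions (arcs of $D_\I$ reflect the master list, hence no cycles; a topological order of an acyclic $D_\I$ serves as the master list, with the converse direction handled by totality of the strict preferences) are exactly the standard translation, which also mirrors the strict special case of the paper's proof of its generalization to weak orders.
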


For a preference system $\I=(G,\preceq)$ with $G=(V,E)$ that is not necessarily strict
we extend the concept of the preference digraph of~$\I$ as follows.
Again, we let $D_\I$ have $V$ as its vertex set, but now we add two types of arcs to~$D_\I$: 
for any $v$ in~$V$ and $a,b \in N_G(V)$ with $a \neq b$ 
we add a \emph{strict arc} $(a,b)$ with label~$v$ whenever $a \prec_v b$, 
and we add a pair of \emph{tied arcs}~$(a,b)$ and $(b,a)$, both with label~$v$, whenever $a \sim_v b$.
Note that this way we indeed generalize our definition above for the preference digraph of strict preference systems.
We will call a cycle of~$D_\I$ that contains a strict arc a \emph{strict cycle}.
The following lemma is a straightforward generalization of Observation~\ref{obs:pref-digraph}.

\begin{mylemma}
\label{lem:pref-digraph-ties}
A preference system $(G,\preceq)$ admits a master list if and only if no cycle of the preference digraph of~$G$ is strict.
\end{mylemma}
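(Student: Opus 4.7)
The plan is to generalize the argument behind Observation~\ref{obs:pref-digraph} by analyzing the strongly connected components (SCCs) of the preference digraph~$D_\I$, rather than simply asking whether $D_\I$ itself is acyclic.

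For the easy direction, suppose $\I$ admits a master list~$\preceq^\ml$. I would take any cycle $a_1, a_2, \dots, a_p, a_1$ in~$D_\I$ and look at each arc $(a_i, a_{i+1})$ together with its label~$v_i$. A strict arc with label~$v_i$ forces $a_i \prec^\ml a_{i+1}$ by consistency of~$\preceq^\ml$ with~$\preceq_{v_i}$, while a tied arc forces $a_i \sim^\ml a_{i+1}$. Chaining these around the cycle and applying transitivity of~$\preceq^\ml$, we obtain $a_1 \preceq^\ml a_1$, so every comparison along the cycle must be a tie; in particular, no strict arc can appear on the cycle.

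For the converse, assume no cycle of~$D_\I$ is strict. I would build a master list from the condensation of~$D_\I$: contract each SCC into a single vertex, obtaining a DAG, and fix a topological order $C_1, C_2, \dots, C_t$ of the SCCs. Define~$\preceq^\ml$ so that vertices in~$C_i$ are placed strictly below vertices in~$C_j$ whenever $i<j$, while all vertices inside the same~$C_i$ are tied. It remains to verify that $\preceq^\ml$ is consistent with~$\preceq_v$ for every~$v \in V(G)$.

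The key (and only subtle) observation is that under the no-strict-cycle hypothesis, every strict arc must cross between two distinct SCCs: otherwise, a strict arc~$(a,b)$ together with the return path from~$b$ to~$a$ guaranteed by strong connectivity would close a strict cycle. Granted this, fix any~$v$ and any $a,b \in N_G(v)$. If $a \prec_v b$, the strict arc~$(a,b)$ with label~$v$ places $a$ in an earlier SCC than~$b$ in the topological order, so $a \prec^\ml b$; if $a \sim_v b$, the pair of tied arcs~$(a,b)$ and~$(b,a)$ with label~$v$ puts $a$ and~$b$ in the same SCC, so $a \sim^\ml b$. Since $\preceq_v$ is total on~$N_G(v)$, these two cases exhaust all comparisons, so the restriction of~$\preceq^\ml$ to~$N_G(v)$ equals~$\preceq_v$. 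The principal obstacle to anticipate is exactly this SCC observation; once it is in place, the rest of the proof is a direct lift of Observation~\ref{obs:pref-digraph} from acyclicity to ``no strict cycles''.
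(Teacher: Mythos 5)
Your proof is correct and follows essentially the same route as the paper: the forward direction is identical, and for the converse the paper also quotients the preference digraph and topologically orders the quotient --- it contracts the tied arcs, whereas you contract the SCCs of $D_\I$, and under the no-strict-cycle hypothesis these coincide, since two vertices in a common SCC are mutually reachable via tied arcs only. Your key observation that every strict arc crosses between distinct SCCs plays exactly the role of the paper's claim that contracting the tied arcs yields an acyclic digraph, so nothing further is needed.
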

\begin{proof}
By definition, $(G,\preceq) \in \FamML$ if and only if there exists a weak order~$\preceq^{\ml}$ on~$V$
such that $\preceq_v$ is the restriction of~$\preceq^\ml$ to $N_G(v)$ for any~$v \in V$.

Let $C$ be a cycle in $D_\I$, and suppose that a master list $\preceq^\ml$ exists for~$\I$.
Note that for any arc~$(a,b)$ in~$C$ we know that $a \preceq_v b$ for some $v \in V$,
implying $a \preceq^\ml b$. As this holds for any arc on the cycle, we get that only $a \sim^{\ml} b$ is possible, 
and therefore no arc on~$C$ can be strict. 

For the other direction, suppose that no cycle in~$D_\I$ contains a strict arc. 
Let us define a master list~$\preceq^\ml$ on~$V$ as follows. First, we let $a \sim^\ml b$ whenever $a$ is reachable from $b$ via tied arcs. 
Clearly, $\sim^\ml$ is symmetric, and moreover, if $a \sim^\ml b$, then $a \prec_v b$ is not possible for any $v \in V$, 
as otherwise a path from~$b$ to~$a$ via tied arcs would form a cycle with the strict arc $(a,b)$ with label~$v$.
Furthermore, contracting all tied arcs creates an acyclic digraph~$D'$.
Let $\prec^\top$ be a topological ordering on the vertices of~$D'$.
For each $v \in V$, let $\varphi(v)$ denote the vertex of $D'$ to which $v$ has been contracted.
We set $a \prec^\ml b$ whenever $\varphi(a) \prec^\top \varphi(b)$ holds. 
It is easy to see that $\prec^\ml$ is a weak ordering on~$V$.

To see that $\prec^\ml$ is a master list for~$\I$, let $a$ and $b$ be two distinct vertices in~$V$.
First note that $a \sim_v b$ for some $v \in V$ immediately implies $a \sim^\ml b$. 
Second, if $a \prec_v b$ for some $v \in V$, then we already know $a \not\sim^\ml b$, yielding $\varphi(a) \neq \varphi(b)$. 
Therefore, $D'$ contains an arc from $\varphi(a)$ to $\varphi(b)$, yielding $\varphi(a) \prec^\top \varphi(b)$ and hence $a \prec^\ml b$. 
\qed
\end{proof}

\subsection{Measuring the distance from $\FamML$}
\label{sec:dist-def}

Let us now define our three measures to describe the distance of a given strict preference system~$\I=(G,\prec)$ from the family $\FamML$ of preference systems that admit a master list:
\begin{itemize}
\item $\distswap(\I)=\min\{|S|:S$ is a set of swaps in~$\I$ such that $\I \lhd S \in \FamML\}$;
\item $\distedge(\I)=\min\{|S|:S \subseteq E(G), \I-S \in \FamML\}$;
\item $\distvert(\I)=\min\{|S|:S \subseteq V(G), \I-S \in \FamML\}$.
\end{itemize}
The measures $\distedge(\I)$ and $\distvert(\I)$ can be easily extended for preference systems that are not necessarily strict, 
since the above definitions are well-defined for any preference system~$(G,\preceq)$. 

Extending the measure $\distswap(\I)$ for non-strict preference systems is, however, not entirely straightforward.
If there are ties in the preferences of some vertex~$v$, how can we define an admissible swap? 
In this paper we use the following definition for swap distance, which seems to be standard in the literature~\cite{BCKLN20,CSS21}.
Let $\preceq_u$ and $\preceq_v$ be weak orders. If they are not defined over the same sets, then the
\emph{swap distance} of $\preceq_u$ and $\preceq_v$, denoted by $\Delta(\preceq_u,\preceq_v)$ is $\infty$, 
otherwise 
$$
\Delta(\preceq_u,\preceq_v) = |\{\{a,b\}: a \prec_u b \textrm{ but } b \preceq_v a \}| + |\{\{a,b\}: a \sim_u b \textrm{ but } a \not\sim_v b \}|.$$
For two preferences systems~$\I=(G,\preceq)$ and $\I'=(G',\preceq')$ with $G=(V,E)$ and $G'=(V',E')$, we 
let their swap distance, denoted by $\Delta(\I,\I')$, be $\infty$ if they are not defined over the same vertex set;
otherwise (that is, if $V=V'$) we let $\Delta(\I,\I')=\sum_{v \in V} \Delta(\preceq_v,\preceq'_v)$.
Using this, we can define
$$\distswap(\I)= \min \left\{ \Delta(\I,\I'): \I' \in \FamML \right\} .$$

The following lemma follows easily from the definitions.

\begin{mylemma}
\label{prop:comparing-modulators}
$\distswap(\I) \geq \distedge(\I) \geq \distvert(\I)$ for any preference system~$\I$.
\end{mylemma}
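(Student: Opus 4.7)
The plan is to establish the two inequalities separately, in each case via a ``one endpoint per target'' construction. For $\distedge(\I) \geq \distvert(\I)$, I would take a minimum edge set $E^*$ certifying $\distedge(\I)$, pick one endpoint from each edge in $E^*$, and let $V^*$ be the resulting vertex set, so that $|V^*| \leq |E^*|$. Then $\I - V^*$ can be obtained from $\I - E^*$ by further vertex deletions, and since $\FamML$ is closed under both vertex and edge deletions (as remarked just after its definition), $\I - V^* \in \FamML$, yielding $\distvert(\I) \leq |V^*| \leq \distedge(\I)$.

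For $\distswap(\I) \geq \distedge(\I)$, I would fix a system $\I' \in \FamML$ achieving $\Delta(\I, \I') = \distswap(\I)$; finiteness of $\Delta$ forces $N_G(v) = N_{G'}(v)$ for every $v$, so $\I$ and $\I'$ share the same underlying graph $G$. For each vertex $v$, let $D_v$ denote the set of unordered pairs counted in $\Delta(\preceq_v, \preceq'_v)$, so that $\sum_v |D_v| = \distswap(\I)$. For each $\{a,b\} \in D_v$ I would put the edge $\{v,a\}$ into a candidate set $E^*$ (choosing one of the two endpoints arbitrarily), yielding $|E^*| \leq \sum_v |D_v| = \distswap(\I)$. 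The crux is then to verify $\I - E^* \in \FamML$: every pair in $D_v$ loses at least one endpoint upon restricting to $N_{G - E^*}(v)$, so the surviving pairs all lie outside $D_v$, and on those $\preceq_v$ and $\preceq'_v$ must agree. Hence $\I - E^*$ and $\I' - E^*$ coincide as preference systems, and closure of $\FamML$ under edge deletion applied to $\I'$ gives $\I - E^* \in \FamML$.

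The main point requiring care is the claim that pairs outside $D_v$ force agreement between $\preceq_v$ and $\preceq'_v$, which has to be verified uniformly for strict and tied comparisons. A short case check on the definition of $\Delta(\preceq_v, \preceq'_v)$ suffices: if $\{a, b\}$ contributes to neither summand of that definition, then $a$ and $b$ are either tied in both preferences or strictly oriented the same way in both, which is exactly the agreement needed. Once this observation is in place, the remainder of the argument is routine bookkeeping.
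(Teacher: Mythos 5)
Your proposal is correct and follows essentially the same route as the paper: deleting one edge of the form $\{v,a\}$ per disagreeing pair so that the restrictions of $\preceq_v$ and $\preceq'_v$ coincide on the surviving neighborhoods (the paper phrases this iteratively, driving $\Delta(\I-S,\I'-S)$ to zero, while you verify the coincidence all at once), and then covering the deleted edge set by one endpoint per edge for $\distedge(\I)\geq\distvert(\I)$. No gaps; the case check on pairs outside $D_v$ is exactly the agreement argument the paper relies on implicitly.
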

\begin{proof}
Suppose that $\distswap(\I) = k$, meaning that $\sum_{v \in V} \Delta(\preceq_v,\preceq'_v) \leq k$ 
for some instance $\I'=(G,\preceq') \in \FamML$.
Let $U$ be the family containing all pairs of the form $(\{a,b\},v)$ for which 
$a \prec_v b$  but $b \preceq'_v a$, or $a \sim_v b$ but $a \not\sim'_v b$. 
By definition, $|U|=\sum_{v \in V} \Delta(\preceq_v,\preceq'_v) \leq k$.
Note that for each $(\{a,b\},v) \in U$, deleting the edge~$e=\{a,v\}$ from~$G$ 
decreases the distance of $\I$ from~$\I'$, that is, $\Delta(\I-e,\I'-e) \leq k-1$
(clearly, the same holds for the edge $\{b,v\}$ as well). 
Repeating this for each pair in~$U$, we can delete a set $S$ of at most~$k$ edges from~$G$ 
so that $\Delta(\I-S,\I'-S)=0$. Since $\FamML$ is closed under edge deletions, we know $\I'-S \in \FamML$, 
implying $\distedge(\I) \leq |S| \leq k$.

Proving the remaining statement of the lemma is easier: 
instead of deleting $k=\distedge(\I)$ edges to obtain an instance~$\I' \in \FamML$, 
we can simply delete a set of at most~$k$ vertices covering these edges in~$G$ to get a sub-instance of~$\I'$.
This shows $\distvert(\I) \leq \distedge(\I)$.
\qed
\end{proof}

Let \myproblem{Master List by Swaps} (or MLS for short) be the problem whose input is a preference system~$\I$ and an integer~$k$, 
and the task is to decide whether $\distswap(\I) \leq k$. 
Assuming that $\I$ is a strict preference system, a set $S$ of at
most~$k$ swaps whose application in $\I$ yields an instance admitting a master list
is called a \emph{solution} for the instance $(\I,k)$ of MLS. 
We define the \myproblem{Master List by Edge Deletion} (or MLED) and the \myproblem{Master List by Vertex Deletion} (or MLVD) 
problems and their solution concepts 
analogously.

\subsubsection{MLS versus the \myproblem{Kemeny Score} problem.}
We note that the MLS problem for a strict preference system~$(G,\prec)$ on a graph $G=(V,E)$ can be reformulated 
as an instance of the \myproblem{Kemeny Score} problem where votes are allowed to be incomplete, 
by setting $V$ as the candidate set and interpreting 
each $\prec_v$, $v \in V$, as a vote containing only the candidates~$N_G(v)$;
we refer to Betzler et al.~\cite{BFGNR09} for a precise definition of this variant of \myproblem{Kemeny Score}.
Note that even if $G$ is a complete graph, $N_G(v)$ excludes~$v$, and 
thus MLS differs from the classic form of the \myproblem{Kemeny Score} problem where each vote is a ranking over the whole candidate set.
We further remark that, when considering weak orders, MLS and \myproblem{Kemeny Score} significantly differ in the way they treat ties~\cite{HSV05,BFGNR09}. 
We will comment in more detail on the connection between our findings for MLS and the known results for \myproblem{Kemeny Score} in Section~\ref{sec:main-res-strict}.

\section{Computing the distance from admitting a master list}
\label{sec:main-res}

Let us now present our main results on recognizing when a given preference list is close to admitting a master list.
We investigate the classical and parameterized complexity of the problem of determining 
each of the three distance measures defined in Section~\ref{sec:prelim} for a given preference system, 
that is, the problems MLS, MLED, and MLVD. 
In Section~\ref{sec:main-res-strict} we consider strict preference systems, 
and then extend our results for weakly ordered preferences in Section~\ref{sec:main-res-ties}.

\subsection{Strict preferences}
\label{sec:main-res-strict}


We are going to show that computing the distance from~$\FamML$ is computationally hard for each of our three distance measures. 
In particular, Theorem~\ref{thm:MLS-MLED-NPhard} presents a reduction from \FAS{} to MLS and also to MLED,
and Theorem~\ref{thm:MLVD-NPhard} provides a reduction from the \myproblem{Hitting Set} problem to MLVD.
Although both of these classic problems are $\NP$-hard, their approximability and 
their parameterized complexity for the standard parameterization---where the parameter is the desired value for the objective function---differs considerably, suggesting that the problem of computing the three distance 
measures may behave differently when viewed from the perspective of approximation or of parameterized complexity. 
Indeed, intrinsic differences between our three problems under examination will surface
when we consider their computational complexity in finer detail.

We start with Theorem~\ref{thm:MLS-MLED-NPhard} showing that we cannot expect a polynomial-time algorithm for MLS or for MLED
and even a polynomial-time approximation is unlikely to exist already for bipartite graphs.
As already mentioned, 
the proof of Theorem~\ref{thm:MLS-MLED-NPhard} relies on a connection between MLS, MLED, and the \FAS{} problem
which, given a directed graph~$D$ and an integer~$k$, asks whether there exists a set of at most~$k$ arcs in~$D$ whose deletion from~$D$
yields an acyclic graph.
Interestingly, the connection of this problem to MLS and to MLED can be used both ways: 
on the one hand, it serves as the basis of our reduction for proving computational hardness,
and on the other hand, we will be able to apply already existing algorithms for \FAS{} in our quest for solving MLS and MLED.

\begin{theorem}
\label{thm:MLS-MLED-NPhard}
MLS and MLED are both $\NP$-hard, 
and assuming the Unique Games Conjecture
 they are $\NP$-hard to approximate by any constant factor in polynomial time.
All of these hold even if the input graph is bipartite with all vertices on one side having degree~2, and preferences are strict.
\end{theorem}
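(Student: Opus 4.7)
The plan is to give a single reduction from the \FAS{} problem that proves both parts of the theorem. Given a \FAS{} instance $(D,k)$ with $D=(V,A)$, I construct a bipartite strict preference system $\I = (G,\prec)$ as follows. The vertex set of~$G$ is $V \cup W$ where $W = \{w_e : e \in A\}$; for each arc $e=(a,b) \in A$ I make $w_e$ adjacent to~$a$ and~$b$ with preference $a \prec_{w_e} b$. Thus~$G$ is bipartite with bipartition~$(V,W)$, and every vertex of~$W$ has degree exactly~$2$. Finally, I fix a global strict ordering $\sigma$ over~$W$ and, for each $a \in V$, let $\prec_a$ be the restriction of~$\sigma$ to the $W$-neighbours of~$a$, so that all preferences at $V$-vertices are consistent with a single ordering of~$W$.

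The next step is to describe the preference digraph~$D_\I$ and to relate it to~$D$. Since no edge of~$G$ joins two vertices on the same side, every arc of~$D_\I$ lies entirely within~$V$ or entirely within~$W$. The preferences at each $w_e$ contribute exactly one arc, namely $(a,b)$ with label~$w_e$ when $e=(a,b)$, giving a labelled copy of~$D$ on the vertex set~$V$. The preferences at $V$-vertices contribute arcs only among $W$-vertices, and all such arcs are consistent with~$\sigma$, so the subgraph of~$D_\I$ induced on~$W$ is acyclic. Hence by Observation~\ref{obs:pref-digraph}, $\I \in \FamML$ if and only if $D$ is acyclic; more generally, every cycle of~$D_\I$ lies in~$V$ and corresponds exactly to a cycle of~$D$. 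This yields a tight correspondence between modifications of the two objects: for MLED, deleting an edge incident to some~$w_e$ removes the arc~$e$ from~$D_\I$ (and possibly some $W$-internal arcs, which are harmless); for MLS, a single swap at~$w_e$ reverses the arc~$e$ in~$D_\I$, while any swap at a $V$-vertex merely rearranges $W$-internal arcs that are already acyclic.

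From this correspondence I will derive that both $\distedge(\I)$ and $\distswap(\I)$ equal the minimum \FAS{} size of~$D$. For MLED, any feedback arc set~$F$ of~$D$ yields an edge-deletion set of the same size in~$\I$ by deleting one $G$-edge of each $w_e$ with $e \in F$; conversely, any edge-deletion set~$S$ for~$\I$ defines $F = \{e \in A : S \textrm{ touches } w_e\}$ of size at most~$|S|$, and the $V$-induced subgraph of the preference digraph of~$\I - S$ is exactly the copy of~$D - F$, so $F$ must be a feedback arc set of~$D$. For MLS, the same reasoning works once one invokes the standard fact that for any digraph the minimum number of arc reversals needed to eliminate all cycles equals the minimum \FAS{} size: given a feedback arc set~$F$, reversing the arcs of~$F$ to respect a topological order of~$D-F$ yields an acyclic digraph; conversely, if reversing~$F$ produces an acyclic digraph, then $D-F$ is a subgraph of it and hence acyclic. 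Both reductions are polynomial-time and preserve objective values, so they are approximation-preserving in the sense of Section~\ref{sec:prelim}. The $\NP$-hardness of~\FAS{} and its UGC-hardness of approximation within any constant factor then transfer verbatim to MLS and MLED. The most delicate point I expect to verify carefully is the ``only if'' direction for swaps: I must show that swaps at $V$-vertices cannot affect any $V$-to-$V$ arc of~$D_\I$, and therefore cannot help in breaking a cycle of~$D_\I$ that originates from~$D$.
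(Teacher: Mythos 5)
Your proposal is correct and follows essentially the same route as the paper: the identical reduction from \FAS{} (a degree-2 dummy vertex per arc encoding its direction, with the original vertices ranking the dummies by a common order), the same observation that the preference digraph is a copy of~$D$ plus an acyclic part, and the same approximation-preserving transfer of the UGC inapproximability of \FAS{}. The only cosmetic differences are that you handle the swap direction directly (noting swaps at $V$-vertices only touch $W$-internal arcs, and reversing the backward arcs of a feedback arc set w.r.t.\ a topological order of $D-F$), whereas the paper takes $F$ inclusion-minimal and routes the swap-to-edge-deletion conversion through Lemma~\ref{prop:comparing-modulators}.
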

 \begin{proof}
We give a reduction from the $\NP$-hard \myproblem{Feedback Arc Set} problem that proves the theorem for both MLS and MLED. 
Let digraph~$D=(V,E)$ and an integer~$k$ be the input of \myproblem{Feedback Arc Set}.

Given $D$, we create the following bipartite preference system~$\I=(G,\prec)$.
For each arc $e=(a,b) \in E$, we create a dummy vertex~$z_e$ whose neighbors are~$a$ and $b$, and we set $a \prec_{z_e} b$. 
We set $V \cup Z$ as the set of vertices in~$G$ where $Z= \{z_e:e \in E\}$.
To define the preferences of agents in~$V$, 
we fix an arbitrary strict ordering~$\prec^Z$ over the set $Z$ of dummy vertices,
and let $\prec_v$ be the restriction of~$\prec^Z$ to $N_G(v)$ for each $v \in V$.
Note that $G$ can be obtained from the undirected version of~$D$ by subdividing every edge, and therefore $G$ is bipartite. 
Moreover, the preference digraph $D_\I$ of~$\I$ is the disjoint union of $D_\I[Z]$, 
which is acyclic, and $D_\I[V]$, which is exactly the digraph~$D$, with each arc~$(a,b)$ labelled by~$z_{(a,b)}$.

We claim that the following three statements are equivalent: 
\begin{itemize}[topsep=4pt]
\item[(1)] $\distswap(\I) \leq k$;
\item[(2)] $\distedge(\I) \leq k$;
\item[(3)] there exists a feedback arc set of size at most~$k$ in~$D$.
\end{itemize}

First note that (1) $\Rightarrow$ (2) is implied by Lemma~\ref{prop:comparing-modulators}.

To prove (2) $\Rightarrow$ (3), 
suppose that $S$ is a set  of at most~$k$ edges in~$G$ such that~$\I-S$ 
admits a master list~$\prec^{\textup{ml}}$. 
Recall that $|N_G(z)|=2$ for each $z \in Z$ and $E(G) \subseteq Z \times V$. 
Define an arc set $F_S= \{e \in E : z_e  \textrm{ is incident to some edge in } S\}$; notice that $|F_S| \leq |S| \leq k$. 
Then $F$ is a feedback arc set, since the restriction of~$\prec^{\textup{ml}}$ on~$V$ is a topological order for~$V$ in~$D-F_S$.
Indeed, if $e=(a,b) \in E \setminus F_S$ then $(z_e,a),(z_e,b)$ are both present in $\I-S$, implying that $a \prec_{z_e} b$ 
is consistent with the master list in $\I-S$, that is, $a \prec^{\textup{ml}} b$.

To prove (3) $\Rightarrow$ (1), suppose that $F$ is a feedback arc set of size at most~$k$ in~$D$. 
W.l.o.g. we may assume that $F$ is inclusion-minimal, and thus reversing each arc of $F$ in $D$ results in an acyclic digraph~$D_{\overleftarrow{F}}$.
Accordingly, let us ``reverse'' the preferences of the agents corresponding to edges in~$F$,
that is, let $S$ be the set of swaps that switches $a$ and $b$ for each $e=(a,b) \in F$ in the preferences of~$z_e$. 
Note that swapping $a$ and $b$ in the preferences if~$z_{\{a,b\}}$ is admissible, since $N_G(z_e)=\{a,b\}$, 
and it is equivalent to reversing the arc $(a,b)$ in the preference digraph.
Thus we obtain that the preference digraph of the instance $\I \lhd S$ is acyclic, as 
it is the disjoint union of $D_\I[Z]$ and the digraph~$D_{\overleftarrow{F}}$. 
This means that $\I \lhd S \in \FamML$ by Observation~\ref{obs:pref-digraph},
implying $\distswap(\I) \leq |S| \leq k$. 

Hence, our claim holds and the reduction is correct, showing $\NP$-hardness for both MLS and MLED.

To show our inapproximability result, let $f(D)$ denote the instance~$\I$ created in the above reduction 
($\I$ is an instance of MLS as well as one of MLED). 
The above arguments prove that an optimal solution for $f(D)=\I$ has the same size as a feedback arc set for~$D$ of minimum size. 
Moreover, for any set~$S$ of edges in~$G$ that is a solution for our MLED instance~$\I$, 
the arc set~$F_S$ is a feedback arc set in~$D$ with $|F_S|\leq |S|$, 
so setting $g(S)=F_S$ yields that the pair $(f,g)$ is an approximation preserving reduction from \FAS{} to MLED.
Similarly, for any set~$S'$ of swaps in~$\I$ that is a solution for our MLS instance~$\I$, 
using the arguments of Lemma~\ref{prop:comparing-modulators} we can compute a set~$h(S')$ of at most~$|S'|$ edges in~$G$ for which $\I-h(S') \in \FamML$, 
so setting $g'(S')=F_{h(S')}$ yields that $(f,g')$ is an approximation preserving reduction from \FAS{} to MLS.
By the results of Guruswami~\cite{GMR08}, we know that no polynomial-time algorithm can obtain a constant-factor approximation
for \FAS{} unless the Unique Games Conjecture fails, 
ruling out a polynomial-time constant-factor approximation algorithm both for MLS and for MLED under the same assumption.
\qed
\end{proof}

Thanks to Lemma~\ref{lem:MLS-strict=FVS} below, for any strict preference system~$\I$ we can decide whether $\distswap(\I) \leq k$ 
for some $k \in \mathbb{N}$ by applying the FPT algorithm of Lokshtanov et al.~\cite{LRS18} 
for \FAS{} on the preference digraph~$D_\I$ and parameter~$k$.
Their algorithm runs in time $O(k!4^k k^6(n+m))$ on an input graph with~$n$ vertices and~$m$ arcs~\cite{LRS18}. 
If $G=(V,E)$ is the graph underlying~$\I$, then~$D_\I$ has~$|V|$ vertices and $O(|V|\! \cdot \! |E|)$ arcs, 
implying a running time of~$O(k! 4^k k^6 |V|\! \cdot \!|E|)$.

\begin{lemma}
\label{lem:MLS-strict=FVS}
For a strict preference system $\I$, $\distswap(\I) \leq k$ if and only if 
the preference digraph of~$\I$ admits a feedback arc set of size at most~$k$.
\end{lemma}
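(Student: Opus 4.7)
The plan is to use the one-to-one correspondence between admissible swaps in a strict preference system and single arc reversals in its preference digraph: performing the swap $(a,b;v)$ replaces the arc from $a$ to $b$ labelled $v$ in $D_\I$ with its reverse, and leaves every other arc untouched.

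For the ``$\Rightarrow$'' direction, I would pick a minimum set $S$ of swaps witnessing $\distswap(\I)\le k$, and let $F$ be the set of $|S|$ arcs reversed in $D_\I$ by these swaps. Observation~\ref{obs:pref-digraph} gives that $D_{\I\lhd S}$ is acyclic, and any cycle of $D_\I - F$ would consist entirely of arcs unchanged by $S$, so it would persist in $D_{\I\lhd S}$. Hence $F$ is a feedback arc set of $D_\I$ of size at most $k$.

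For the ``$\Leftarrow$'' direction, I would take a feedback arc set $F$ of $D_\I$ with $|F|\le k$ and pass to an inclusion-minimal subset $F'\subseteq F$ that is still a feedback arc set. Fixing a topological order $\sigma$ of $D_\I-F'$, I would observe that no arc $e\in F'$ can be ``forward'' with respect to $\sigma$: otherwise $(D_\I-F')+e$ would still respect $\sigma$ and hence be acyclic, contradicting the minimality of $F'$. So every arc of $F'$ points backwards with respect to $\sigma$, and reversing them all turns them into forward arcs, yielding an acyclic digraph $D'$. For each $v\in V(G)$, the restriction of $D'$ to the arcs labelled $v$ is a sub-tournament of an acyclic digraph, hence itself acyclic and therefore transitive, and so defines a valid strict order $\prec'_v$. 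The resulting preference system $\I'=(G,\prec')$ satisfies $D_{\I'}=D'$, so $\I'\in\FamML$ by Observation~\ref{obs:pref-digraph}; moreover $\Delta(\prec_v,\prec'_v)$ equals the number of arcs of $F'$ labelled $v$, giving $\Delta(\I,\I')=|F'|\le k$, and a standard bubble-sort argument realizes this distance with a set of exactly $|F'|$ admissible swaps, so $\distswap(\I)\le k$.

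The main obstacle will be the ``$\Leftarrow$'' direction: reversing an arbitrary feedback arc set can create new cycles (and could even destroy local acyclicity at some vertex, so that the result no longer comes from a valid preference system), and the passage to an inclusion-minimal subset together with the topological-order characterisation is what makes this step go through.
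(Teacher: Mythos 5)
Your proof is correct, and its ``$\Leftarrow$'' direction takes a genuinely different route from the paper's. The paper argues greedily and stays entirely within the swap formulation: after passing to an inclusion-minimal feedback arc set $F$, it shows that $F$ always contains an arc $(a,b)$ with label $v$ such that $a$ and $b$ are \emph{consecutive} in $v$'s preferences (otherwise a vertex $c$ between them yields a path $(a,c,b)$ labelled $v$ that, combined with the minimality of $F$, contradicts the choice of a label-$v$ arc in $F$ whose endpoints are closest in $\prec_v$); it then performs this one admissible swap and recurses on $F$ minus that arc. You instead reverse \emph{all} arcs of an inclusion-minimal feedback arc set $F'$ at once, justify acyclicity of the result via the topological order of $D_\I-F'$ (every arc of $F'$ must be backward, by minimality), check that each per-label sub-tournament of the reversed digraph is an acyclic, hence transitive, tournament and so defines a valid strict order, and finally convert the inversion count $|F'|$ into a set of admissible swaps via the standard Kendall-tau/bubble-sort equivalence. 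Both arguments give the same bound; yours needs the extra (but standard) facts that a minimal feedback arc set can be reversed without creating cycles -- a fact the paper itself invokes in the proof of Theorem~\ref{thm:MLS-MLED-NPhard} -- and that swap distance between strict orders equals the number of inverted pairs, whereas the paper's iterative argument avoids both by never leaving the swap-set formulation. One cosmetic point in your ``$\Rightarrow$'' direction: the arcs touched by the swaps in $S$ number \emph{at most} $|S|$ (a pair could in principle be flipped back and forth), but since untouched arcs persist in $D_{\I\lhd S}$, the argument goes through verbatim with $|F|\le|S|\le k$.
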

\begin{proof}
Let $\I=(G,\prec)$ be a strict preference system with~$G=(V,E)$ as its underlying graph, 
and let $D_{\I}$ be the preference digraph of~$\I$. 
Observe that performing a swap that is admissible in the preferences of some vertex~$v \in V$ corresponds to
reversing an arc in $D_{\I}$ with label~$v$.
Hence, a set~$S$ of swaps such that $\I \lhd S \in \FamML$ directly yields a feedback arc set of size~$|S|$ for the preference digraph~$D_{\I}$.

To see the other direction, we need to show that a feedback arc set~$F$ for~$D_{\I}$ 
can be turned into a set~$S$ of admissible swaps in~$\I$ with $|S| \leq |F|$. 
We can assume w.l.o.g. that $F$ is inclusion-minimal.  
Given~$F$, we iteratively find an arc~$f=(a,b) \in F$ with label~$v$ such that $(a,b)$ is an admissible swap in $v$'s preferences,
and then proceed with the instance~$\I'$ resulting from~$\I \lhd (a,b;v)$ and the set~$F \setminus \{f\}$
which is a feedback arc set for the preference digraph of~$\I \lhd (a,b;v)$.

It remains to show that we can always find a suitable arc $f=(a,b) \in F$ with label~$v$ such that 
$(a,b)$ is an admissible swap in $v$'s preferences. Suppose otherwise, and let $(a,b) \in F$ with label~$v$ be such that 
the distance between $a$ and $b$ in $v$'s preferences is as small as possible. 
As $(a,b;v)$ is not a suitable swap in~$\I$, there exists a vertex $c$ between $a$ and $b$ in $v$'s preferences.
This yields a path $P=(a,c,b)$ in $D_\I$ with both of its arcs having label~$v$. 
Since $F$ is inclusion-minimal, $P$ cannot be a path in $D_\I-F$, and so at least one of its arcs is in~$F$, 
contradicting our choice of~$f$. 
\qed
\end{proof}

\begin{mycorollary}
\label{cor:MLS-FPT}
If preferences are strict, then MLS is fixed-parameter tractable with parameter~$k$, 
and can be solved in time $O(k! 4^k k^6 |V|\!\! \cdot \!\! |E|)$.
\end{mycorollary}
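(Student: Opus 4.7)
The plan is to derive the corollary directly from Lemma~\ref{lem:MLS-strict=FVS} by invoking a known FPT algorithm for \FAS{} as a black box. Given a strict preference system $\I=(G,\prec)$ with $G=(V,E)$ and a parameter~$k$, I would first construct the preference digraph $D_\I$ in polynomial time; by Lemma~\ref{lem:MLS-strict=FVS}, the instance $(\I,k)$ of MLS is a yes-instance if and only if $D_\I$ admits a feedback arc set of size at most~$k$. So the algorithm reduces to calling the FPT \FAS{} algorithm of Lokshtanov et al.~\cite{LRS18} on $(D_\I,k)$ and returning its answer.

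Next I would bound the running time by plugging the size of $D_\I$ into the complexity bound of~\cite{LRS18}, which is $O(k!\,4^k k^6 (n+m))$ on a digraph with $n$ vertices and $m$ arcs. Here $D_\I$ has $n=|V|$ vertices, and its arcs are labelled by vertices: for each $v \in V$ and each pair of distinct neighbors $a,b \in N_G(v)$ we introduce one labelled arc, contributing $O(|V|\cdot|E|)$ arcs in total. Therefore $n+m = O(|V|\cdot|E|)$, and the overall running time becomes $O(k!\,4^k k^6 \,|V|\cdot|E|)$, as claimed.

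One subtlety to address is that Lokshtanov et al.'s algorithm is stated for digraphs without parallel arcs, whereas $D_\I$ is explicitly allowed to contain parallel arcs (distinguished by their labels). This is not a genuine obstacle: parallel arcs from $a$ to $b$ can all be cut simultaneously by a single deletion when measured in the underlying simple digraph, but in our setting each labelled arc must be treated as an independent object, since each corresponds to a distinct admissible swap in a distinct vertex's preferences. The standard fix is to subdivide each labelled arc once, introducing auxiliary vertices; this preserves minimum feedback arc set size and keeps the number of vertices and arcs within $O(|V|\cdot|E|)$, so the asymptotic bound is unaffected. After this cosmetic adjustment the cited FPT result applies verbatim, and the correspondence given by Lemma~\ref{lem:MLS-strict=FVS} between feedback arcs and admissible swaps translates the output back into a solution for MLS of the same size, completing the proof.
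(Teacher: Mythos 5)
Your proposal matches the paper's own argument: construct the preference digraph, invoke Lemma~\ref{lem:MLS-strict=FVS} to reduce MLS to \FAS{}, and plug the $|V|$ vertices and $O(|V|\cdot|E|)$ arcs of $D_\I$ into the $O(k!\,4^k k^6(n+m))$ bound of Lokshtanov et al. Your extra remark on handling parallel labelled arcs via subdivision is a harmless technical refinement that the paper leaves implicit, and it does not change the approach or the stated running time.
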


We remark that 
the connection between \myproblem{Kemeny Score} and \FAS{} has already been noted by Bartholdi~III et al.~\cite{BTT89} who proved its $\NP$-completeness.
The variant of \myproblem{Kemeny Score} with incomplete votes was investigated by Betzler et al.~\cite{BFGNR09};
they proved its fixed-parameter tractability when parameterized by the desired value of the Kemeny score. 
This already implies that MLS is fixed-parameter tractable with parameter~$k$, 
although the running time we obtain in 
Corollary~\ref{cor:MLS-FPT} is better than the one stated in~\cite[Theorem~10]{BFGNR09}, 
mainly due to the improvement by Lokshtanov et al.~\cite{LMVGRM07} over the original FPT algorithm for \FAS{} by Chen et al.~\cite{CLLOR08}.

\medskip
In contrast to MLS, the MLED problem is $\mathsf{W}[1]$-hard with~$k$ as the parameter;
the reduction is from~\textsc{Multicolored Clique}~\cite{FHRV09}. 

\begin{theorem}
\label{thm:MLED-W1hard}
MLED is $\mathsf{W}[1]$-hard with parameter~$k$, even for strict preferences.
\end{theorem}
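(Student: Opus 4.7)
The plan is to reduce in FPT-time from \myproblem{Multicolored Clique}, which is $\mathsf{W}[1]$-hard with parameter~$k$ by Fellows et al.~\cite{FHRV09}. Given a graph~$H$ with vertex set partitioned into color classes $V_1,\dots,V_k$, I would construct a strict preference system~$\I=(G,\prec)$ and set~$k'=k+\binom{k}{2}$ so that $\distedge(\I)\leq k'$ if and only if $H$ contains a multicolored clique.

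The conceptual driver is the contrast between MLS and MLED: deleting an edge $\{v,a\}$ in~$G$ erases every arc of $D_\I$ that carries label~$v$ and is incident to~$a$, while a swap kills only one arc. To make edge deletions strictly cheaper than hitting all cycles arc by arc, the construction will bundle many cycles of $D_\I$ through common (label,\,endpoint) pairs and arrange that the only budget-respecting deletion patterns correspond to a clique in~$H$.

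For each color class $V_i$ I would attach a \emph{selector gadget} centered on a new vertex $s_i$ whose preference list contains $V_i$ together with some auxiliary vertices. The gadget is padded with enough parallel 2-cycles of the form $v\leftrightarrows v'$ (forward arcs labelled~$s_i$, backward arcs labelled by witness vertices specific to $v$ and~$v'$) so that the only way to destroy the whole bundle with a single edge of~$G$ is to delete $\{s_i,v\}$ for exactly one $v\in V_i$; this corresponds to selecting~$v$ as the representative of~$V_i$ and uses $k$ budget units in total. For each pair $(i,j)$ with $i<j$ I would attach a \emph{verification gadget} around an edge-selector vertex~$w_{ij}$ whose neighborhood encodes $E(H)\cap(V_i\times V_j)$; its preferences are arranged so that every non-edge $\{v,u\}$ with $v\in V_i$, $u\in V_j$ induces an extra cycle in $D_\I$ that re-uses the witness arcs associated with $v$ and~$u$ in the selector gadgets. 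Such a non-edge cycle disappears for free provided the selector deletions have already removed both witness arcs, i.e.\ provided both $v$ and $u$ were selected; one further edge deletion incident to~$w_{ij}$ then disposes of the cycles corresponding to the chosen edge, consuming a single unit of budget per pair.

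The main obstacle is establishing \emph{rigidity} for the converse direction: I must ensure that every edge-deletion set of size at most $k+\binom{k}{2}$ takes exactly one selector edge per color class and exactly one verifier edge per pair, and that joint consistency of the resulting selections forces a multicolored clique in~$H$. Blocking cheats such as spending extra budget in one gadget to save in another, or partial selections that accidentally dismantle cycles in unintended ways, will require careful padding --- every alternative deletion pattern should leave at least one cycle of $D_\I$ intact, so that the minimum is $k+\binom{k}{2}$ exactly when the selections project to a clique. The forward direction is by comparison straightforward: starting from a multicolored clique, the $k+\binom{k}{2}$ prescribed deletions destroy every cycle of $D_\I$ by construction, which certifies that $\I-S\in\FamML$ via Observation~\ref{obs:pref-digraph}.
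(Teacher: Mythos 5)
You have the right starting point---an FPT reduction from \myproblem{Multicolored Clique}, exploiting that in MLED a single edge deletion can wipe out several arcs of the preference digraph at once---and this is indeed the route the paper takes. However, your proposal stops exactly where the proof actually has to do its work. The entire difficulty is the rigidity/consistency argument that you explicitly defer to ``careful padding'': you must prevent a budget of $k+\binom{k}{2}$ deletions from being spent in unintended ways, and you must force the pair-gadget choices to agree with the class-gadget choices. No concrete mechanism for this is given, and the mechanism is not routine. Note in particular that deleting an edge $\{s_i,v\}$ of $G$ removes \emph{every} arc labelled $s_i$ incident to $v$ and every arc labelled $v$ incident to $s_i$; these side effects make it delicate to guarantee both that one deletion ``destroys the whole bundle'' in a selector gadget and that it does not accidentally dismantle cycles elsewhere. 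As written, neither direction of your equivalence can be checked, so there is a genuine gap rather than a complete proof.

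For comparison, the paper's construction achieves consistency without any separate vertex-selection budget: it sets $k'=\binom{k}{2}$ exactly. For each vertex $v\in V^i$ there is a gadget with $k-1$ arc-disjoint $v\to v'$ paths (one per other color class), built from vertices $a_e,a'_e,b_e,b'_e$ for the edges $e$ incident to $v$; the gadgets of each class are chained into a cycle by undeletable ``fixed'' arcs (realized as $k'+1$ parallel arcs with dummy labels). The crucial trick is the cross-labelling: the arc $(a_e,a'_e)$ carries label $b_e$ and the arc $(b_e,b'_e)$ carries label $a_e$, so deleting the single edge $\{a_e,b_e\}$ of $G'$ removes exactly these two arcs, one in the gadget of each endpoint of $e$. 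Breaking all cycles requires, for each class $i$, destroying all $k-1$ paths of some gadget, i.e.\ removing at least $k(k-1)=2\binom{k}{2}$ labelled arcs in total; since no edge deletion can remove more than two such arcs, every deletion must remove exactly two, which forces all deleted edges to be of the form $\{a_e,b_e\}$ and the selected edges $e$ to form a multicolored clique. This tight counting argument is precisely the ingredient your sketch is missing; to complete your proof you would either need to supply an equally rigid padding scheme for your selector/verifier design or adopt a doubling device of this kind.
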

\begin{proof}
We are going to present a reduction from the $\mathsf{W}[1]$-hard \myproblem{Multicolored Clique} problem~\cite{FHRV09}, 
where the input is an undirected graph~$G$ and a parameter~$k$ with the vertex set of~$G$ partitioned into~$k$ 
independent sets $V^1,\dots,V^k$, and the task is to find a clique of size~$k$ in~$G$. 
Clearly, such a clique must contain one vertex from each set~$V^i$, $i \in [k]$. 
Let us denote by $v^i_1,v^i_2, \dots, v^i_{|V^i|}$ the vertices in $V^i$.
We are going to construct a preference system~$\I=(G',\prec)$ and an integer~$k'$ such that 
$G$ contains a clique of size~$k$ if and only if there exists a set~$S$ of at most~$k'$ edges in $G'$ such that $\I-S \in \FamML$.

Instead of defining $G'$ directly, we are going to define a digraph~$D$ that will be the preference digraph of~$\I$.
Furthermore, we will explicitly label only a subset of the arcs in~$D$, and declare the remaining arcs as \emph{fixed}, 
meaning that they represent a set $k'+1$ parallel arcs, each labelled by distinct dummy vertices. 
First, we create a set $\{a_e,a'_e, b_e, b'_e:e \in E(G)\}$ and a set $\{v,v':v \in V(G)\}$ of vertices. 
Next, for each $i \in [k]$ and each $v \in V^i$ we create a gadget~$D_v$ as follows. 
For each $j \in [k] \setminus \{i\}$, let $E^j(v)$ denote the set of edges in $G$ 
that connect $v$ with some vertex in~$V^j$, and let $n_v^j=|E^j(v)|$.
Let us create a bijection $\sigma_v^j: [n_v^j] \rightarrow E^j(v)$. 
The gadget~$D_v$ will contain $v$ as a source and $v'$ as a sink, and we define $k-1$ arc-disjoint paths from~$v$ to~$v'$: 
for each $1 \leq j <i $ we add the path $$P_v^j=(v,a_{\sigma_v^j(1)},a'_{\sigma_v^j(1)},a_{\sigma_v^j(2)},a'_{\sigma_v^j(2)},\dots, 
a_{\sigma_v^j(n_v^j)},a'_{\sigma_v^j(n_v^j)},v'),$$
and for each $i<j \leq k $ we add the path $$P_v^j=(v,b_{\sigma_v^j(1)},b'_{\sigma_v^j(1)},b_{\sigma_v^j(2)},b'_{\sigma_v^j(2)},\dots, 
b_{\sigma_v^j(n_v^j)},b'_{\sigma_v^j(n_v^j)},v').$$
If $E^j(v)=\emptyset$, then the corresponding path from~$v$ to~$v'$ consist solely of the arc~$(v,v')$.
For each arc $(a_e,a'_e)$ appearing in~$D_v$ we add $b_e$ as its label, and conversely,
for each arc $(b_e,b'_e)$ appearing in~$D_v$ we add $a_e$ as its label; we declare all remaining arcs in~$D$ as fixed.

Next, for each $i \in [k]$, we chain the gadgets corresponding to vertices of~$V^i$ into a cycle by adding an 
arc~$((v^i_\ell)',v^i_{\ell+1})$ for each $\ell \in [|V^i|]$, where we define $v^i_{|V^i|+1}:=v^i_1$. 
We declare each of these arcs as fixed.
Let $D^i$ denote the directed graph created thus far on the vertices~$\{V(D_v):v \in V^i\}$;
note that  for any two indices~$i \neq j$, the digraphs  $D^i$ and $D^j$ are vertex-disjoint, and there are no arcs running between them. 
Observe also that every cycle in~$D^i$ traverses all vertices~$\{v,v':v \in V^i\}$, 
using some path $P_v^j$ for each $v \in V^i$ to get from~$v$ to~$v'$.
We let the union of $D^1, \dots, D^k$ be the subgraph of~$D$ induced by all non-dummy vertices. 

To finish the construction of~$D$, it remains to take care of dummy vertices (recall that by definition, $V(D)=V(G')$).
Let $Z$ be the set of dummy vertices introduced while creating the fixed arcs, and let $\prec^Z$ be an arbitrary ordering on~$Z$. 
Note that each dummy $z \in Z$ is used only once, when creating some fixed arc $e_z$ in~$D$ 
(meaning  that $z$ is the label of one of the $k'+1$ parallel arcs which together constitute the fixed arc~$e_z$), 
so $N_{G'}(z)$ is the set of endpoints of~$e_z$. 
We then define the preferences of all non-dummy vertices so that they prefer non-dummy vertices to dummies, 
and their preferences between dummies are derived from~$\prec^Z$.
Observe that our instance $\I=(G',\prec)$ and its preference digraph~$D$ are now well-defined. 
Finally, we set $k'=\binom{k}{2}$.

We claim that $G$ admits a clique of size~$k$ if and only if 
there exists a set~$S$ of at most~$k'$ edges in $G'$ such that $\I-S \in \FamML$.

First assume that the vertices $x_1,\dots,x_k$ form a clique in~$G$. 
We define a corresponding set $S=\{ \{a_e,b_e\} : e =\{x_i,x_j\} \textrm{ for some }1 \leq i<j\leq k \}$.
Notice that deleting an edge~$\{a_e,b_e\}$ from~$\I$ corresponds to deleting two arcs from the preference subgraph~$D$,
namely the arc~$(a_e,a'_e)$ labelled with~$b_e$ and the arc~$(b_e,b'_e)$ labelled with~$a_e$. Hence, the deletion of~$S$ 
removes an arc from each of the paths~$P_{x_i}^j$ with $1 \leq i<j \leq k$, 
which ensures for each~$i \in [k]$ that there is no path from~$x_i$ to~$x'_i$. 
Hence, the preference digraph of~$\I-S$ is indeed acyclic,
and thus $\I-S \in \FamML$ by Observation~\ref{obs:pref-digraph}.

Assume now that there exists a set $S \subseteq E(G')$ with $|S| \leq k'$ such that $\I-S \in \FamML$. 
Notice that fixed arcs in~$D$ are also present as arcs in the preference digraph $D_S$ of~$\I-S$, by our bound $|S| \leq k'$.
Also, for each $i \in [k]$ there must exist some $x_i \in V^i$ for which there is no path in~$D_S$ from~$x_i$ to~$x'_i$, as otherwise
these paths together with the fixed arcs between gadgets form a cycle. Inspecting the gadget $D_{x_i}$, 
it is clear that the deletion of~$S$ must remove at least one arc on the path~$P_{x_i}^j$ for each $j \in [k]\setminus \{i\}$,
which can be achieved by deleting some edge of~$G'$ incident to~$a_e$ or to~$b_e$ for some~$e \in E^j(x_i)$. 
This means that we have to remove at least $k-1$ arcs from each $D^i$, a total of at least~$k(k-1)=2k'$ arcs.
This can only be achieved by the deletion of~$|S| \leq k'$ edges, if the deletion of each edge in~$S$ from~$G'$ results 
in the removal of exactly two arcs (as no edge deletion can remove more than~$2$ arcs from the 
preference digraph labelled by non-dummy vertices),
 and hence cannot involve any vertices of the form~$a'_e$ or $b'_e$. 
Therefore, $S$ must contain a set of $\binom{k}{2}$ edges of the form~$\{a_e,b_e\}$ whose deletion from~$G'$ 
results in the removal of an arc from each of the paths~$P^j_{x_i}$ for each $1 \leq i<j \leq k$. 
This is only possible if~$S$ consists of the edge set of a clique consisting of the vertices~$x_1,\dots,x_k$.
\qed
\end{proof}

Although Theorem~\ref{thm:MLED-W1hard} provides strong evidence that there is no FPT algorithm for MLED with parameter~$k$, 
and by Theorem~\ref{thm:MLS-MLED-NPhard} we cannot hope for a polynomial-time approximation algorithm for MLED either, 
our next result shows that combining these two approaches yields a way to deal with the computational hardness of the problem.
Namely, Theorem~\ref{thm:MLED-2approx-FPT} provides a 2-approximation for MLED whose running time is FPT with parameter~$k$.
This result again relies heavily on the connection between MLED and \FAS{}.

\begin{theorem}
\label{thm:MLED-2approx-FPT}
There exists an algorithm that achieves a 2-approximation for MLED if preferences are strict, and runs in FPT time with parameter~$k$. 
\end{theorem}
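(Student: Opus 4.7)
The plan is to leverage the connection between MLED and \FAS{} on the preference digraph $D_\I$, exploiting the fact that each arc $(a,b)$ with label $v$ in $D_\I$ can be eliminated by deleting either of two specific edges in $G$: $\{a,v\}$ or $\{b,v\}$. This ``2-choice'' structure is the source of the approximation factor~2.

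I would first construct $D_\I$, and then use an iterative branching scheme. At each step, if $D_\I$ is acyclic, the current edge set is returned; otherwise, we locate a cycle $C$ in $D_\I$ and branch over all arcs of $C$. For each choice of arc $(a,b) \in C$ with label $v$, we \emph{simultaneously} delete both candidate edges $\{a,v\}$ and $\{b,v\}$ from $G$ and recurse on the resulting instance with the remaining budget reduced by $2$. The crucial point is that this design avoids branching on the binary choice between $\{a,v\}$ and $\{b,v\}$, at the price of one extra deletion per processed arc.

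The 2-approximation factor then follows from a charging argument. Given an optimal solution $S^*$ for MLED of size $\distedge(\I) \leq k$, the arcs killed by $S^*$ form a feedback arc set $F_{S^*}$ in $D_\I$, and in particular every cycle $C$ of $D_\I$ intersects $F_{S^*}$. Hence when the algorithm branches over the arcs of $C$, at least one branch picks an arc $(a,b) \in F_{S^*}$, and for this arc $S^*$ must contain $\{a,v\}$ or $\{b,v\}$. Charging the two deletions we perform to this single edge of $S^*$ and iterating yields an output set of size at most $2 \cdot \distedge(\I) \leq 2k$.

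The main obstacle is to bound the running time to be FPT in~$k$. The branching factor at each step equals the length of the cycle found, which a priori cannot be bounded by any function of~$k$ (as cycle lengths in $D_\I$ can be arbitrary). To circumvent this, I would invoke the FPT algorithm for \FAS{} by Lokshtanov et al.~\cite{LRS18} as a subroutine, either to restrict attention to short cycles or to compute a small feedback arc set that guides the branching, thereby bounding the branching factor and the iteration count in terms of~$k$ alone—possibly combined with iterative compression over intermediate edge sets. With such careful parameter bookkeeping, the overall running time is $f(k) \cdot \mathrm{poly}(|\I|)$, while the charging argument above preserves the 2-approximation guarantee.
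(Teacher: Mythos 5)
Your charging argument for the factor~2 is fine as far as it goes, but the FPT running time is exactly the part you leave open, and the remedies you sketch do not close it. Branching over all arcs of a cycle of the preference digraph $D_\I$ has branching factor equal to the cycle length, which is not bounded by any function of~$k$, and there need not be any short cycles to fall back on. More importantly, the natural rescue of running the \FAS{} algorithm of Lokshtanov et al.\ on $D_\I$ cannot ``guide'' the branching: by Lemma~\ref{lem:MLS-strict=FVS}, the minimum feedback arc set of $D_\I$ equals $\distswap(\I)$, and this is \emph{not} bounded by any function of $\distedge(\I)$ --- deleting a single edge $\{a,v\}$ removes from $D_\I$ every arc labelled~$v$ incident to~$a$ and every arc labelled~$a$ incident to~$v$, so one edge deletion can be worth $\Omega(|V|)$ arcs, and there are instances with $\distedge(\I)=1$ but $\distswap(\I)=\Omega(|V|)$ (put one vertex far from its master-list position in a single long preference list). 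Hence on a yes-instance of MLED the subroutine may correctly report that no feedback arc set of size $2k$ exists in $D_\I$ and give you nothing; ``possibly combined with iterative compression'' and ``careful parameter bookkeeping'' do not substitute for the missing mechanism. So as written the proposal does not establish an $f(k)\cdot\mathrm{poly}(|\I|)$ running time.

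The paper closes precisely this gap by not branching at all and not working in $D_\I$: it builds an auxiliary digraph $H_\I$ in which every edge $\{a,v\}\in E$ is split into two port vertices $a_v^-,a_v^+$ with arcs $(a_v^-,a)$ and $(a,a_v^+)$, and an arc $(a_c^+,b_c^-)$ is added whenever $a\prec_c b$. The point of the gadget is that an optimal MLED solution~$S$ of size at most~$k$ yields a feedback arc set of~$H_\I$ of size at most~$2|S|\le 2k$: instead of deleting an edge one can move the corresponding neighbor left or right in the relevant preference list, and each such move marks exactly one port arc. Conversely, any feedback arc set can be normalized to consist of port arcs only, and then translates back one-to-one into an edge set~$S_F$ with $\I-S_F\in\FamML$ and $|S_F|\le|F|$. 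One exact run of the \FAS{} FPT algorithm on $H_\I$ (with parameter $2k$) therefore gives the 2-approximation in FPT time. If you want to salvage your plan, this kind of per-edge splitting --- ensuring that one edge deletion is charged only $O(1)$ arcs in the digraph you hit cycles in --- is the missing idea; without it the passage from MLED to a feedback arc set problem loses the parameter.
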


\noindent
{\bf 2-approximation FPT algorithm for MLED  (strict preferences).}
Let the strict preference system~$\I=(G,\prec)$ with underlying graph $G=(V,E)$ and~$k \in \mathbb{N}$ be our input for MLED.
 See Algorithm~\ref{alg:MLED-2approx-strict} for a formal description.

First, we construct a directed graph~$H_\I$ by setting 
%
\begin{eqnarray*}
V(H_\I)&=&V \cup \{a_v^-, a_v^+: \{a,v\} \in E\}, \\
A(H_\I)&=& \{(a_c^+,b_c^-): a,b,c \in V, a \prec_c b \}  \cup \, \{(a_v^-,a), (a,a_v^+): v \in V, a \in N_G(v)\}. 
\end{eqnarray*} 

Next, we compute a minimum feedback arc set~$F$ in~$H_\I$ using the algorithm by Lokshtanov et al.~\cite{LRS18}. 
Notice that w.l.o.g. we may assume that $F$ only contains arcs incident to some vertex in~$V$, 
as we can replace any arc $(a_c^+,b_c^-)$ with the sole arc leaving~$b_c^-$, namely $(b_c^-,b)$, 
since all cycles containing~$(a_c^+,b_c^-)$ must also go through $(b_c^-,b)$.

Finally, we return the set~$S_F=\{\{a,v\} \in E: (a,a_v^+) \in F \textrm{ or } (a_v^-,a) \in F\}$.

\begin{algorithm}[t]
\caption{Obtaining a 2-approximation for MLED on input~$(\I,k)$ with strict preferences}
\label{alg:MLED-2approx-strict}
\begin{algorithmic}[1]
\State Construct the graph $H_\I$.
\State Let $F$ be a solution for \myproblem{Feedback Arc Set} on input~$(H_\I,k)$.
\State Ensure that each arc in $F$ is incident to some vertex in~$V$ by replacing all arcs of~$F$ entering some $a_v^-$ with $(a_v^-,a)$.
\State Return $S_F=\{\{a,v\} \in E: (a,a_v^+) \in F \textrm{ or }  (a_v^-,a)  \in F\}$.
\end{algorithmic}
\end{algorithm}

\begin{proof}[of Theorem~\ref{thm:MLED-2approx-FPT}]
We prove that Algorithm~\ref{alg:MLED-2approx-strict} gives a 2-approximation,
that is, 
if there exists a set~$S$ of at most~$k$ edges in~$G$ such that~$\I-S$ admits a master list~$\prec^\ml$, 
then Algorithm~\ref{alg:MLED-2approx-strict} returns a solution for~$\I$ of size at most~$2|S| \leq 2k$.

For a given instance $\I=(G,\prec)$, consider the directed graph $H_{\I}$ constructed by the algorithm. 
Observe first that contracting for each $v \in V$ the subgraph of~$H_\I$ 
induced by  $\{v\} \cup \{v_a^+,v_a^-: a \in N_G(v)\}$ in~$H_{\I}$ 
yields exactly the preference digraph of~$\I$ (without the arc-labelling).
Furthermore, $H_{\I}$ is acyclic if and only if the preference subgraph of~$\I$ is acyclic. 

Take any vertex~$a \in V$ incident to some edge in~$S$; let $\{a,b_1\}, \dots, \{a,b_t\}$ be the edges of~$S$ incident to~$a$.
Instead of deleting these edges, it is also possible to ``move'' the vertices in $B=\{b_1, \dots, b_t\}$ 
within the preferences of~$a$ to obtain an instance $\I'$ whose preferences are consistent with~$\prec^\ml$: 
when representing $a$'s preferences as a list, each $b \in B$ needs to be moved 
either to the left (becoming more preferred by~$a$) or to the right (becoming less preferred by~$a$). 
In the former case we mark the arc~$(b,b_a^+)$, and in the latter case we mark~$(b_a^-,b)$. 
Clearly, we have marked at most~$2|S|\leq 2k$ arcs in~$H_{\I}$, since for each edge~$\{a,b\} \in S$ 
we may need to move $a$ in $b$'s preferences, 
and also to move $b$ in $a$'s preferences, and each such move results in marking one arc. 

We claim that the set $M$ of marked arcs is a feedback arc set in~$H_{\I}$. 
Suppose for contradiction that there is a cycle $C$ in~$H_{\I}-M$. 
Observe that $C$ can be decomposed into paths of length~3 of the form~$(a,a_c^+,b_c^-,b)$. 
For such a path it follows that $a \prec^\ml_c b$ holds: first, $a \prec_c b$ holds in~$G$ by construction, 
and we have not moved $a$ up in~$c$'s preferences (by $(a,a_c^+) \notin M$), 
nor have we moved $b$ down in~$c$'s preferences (by $(b_c^-,b) \notin M$), so $c$ still prefers $b$ to~$a$ 
in the instance~$\I'$ whose preferences are consistent with~$\prec^\ml$. Therefore, the cycle~$C$ implies 
a cycle in the preference digraph of~$\I-S$, a contradiction. 
Thus, $M$ is a feedback arc set of size at most~$2k$. 

This implies that Algorithm~\ref{alg:MLED-2approx-strict} will produce a feedback arc set~$F$ for~$H_{\I}$ of size at most~$2k$ on line~2.
It remains to show that the set~$S_F$ is indeed a solution for our instance~$\I$ of MLED.
Consider the changes in the digraph~$H_\I$ as a result of deleting the edges of~$S_F$ from~$\I$.
Note that the deletion of~$S_F$ from~$\I$ ensures the removal of all arcs in~$F$ from~$H_\I$	
(since deleting an edge~$\{a,b\}$ from~$G$ corresponds to removing the 
vertices $a_b^+, a_b^-, b_a^+,$ and $b_a^-$ and all arcs incident to them from~$H_\I$),
so the resulting digraph~$H_{\I-S_F}$ is acyclic. 
Recall that $H_{\I-S_F}$ is acyclic if and only if the preference digraph of~$\I-S_F$ is acyclic. 
Thus by Observation~\ref{obs:pref-digraph} we get $\I-S_F \in \FamML$. 
This proves that Algorithm~\ref{alg:MLED-2approx-strict} yields a 2-approximation.

Note that $H_\I$ has $|V|+2|E|$ vertices and at most~$|V|\!\! \cdot \!\! |E|+4|E|$ arcs. 
The total running time of Algorithm~\ref{alg:MLED-2approx-strict} is therefore $O(k!4^k k^6|V|\!\! \cdot \!\! |E|)$ 
which is indeed FPT with parameter~$k$.
\qed
\end{proof}

Contrasting our positive results for MLS and MLED, a reduction from the classic~\textsc{Hitting Set} problem~\cite{karp-1972}
 shows that MLVD
is computationally hard both in the classic and in the parameterized sense, and cannot be 
approximated by any FPT algorithm, as stated by Theorem~\ref{thm:MLVD-NPhard}.

\begin{theorem}
\label{thm:MLVD-NPhard}
MLVD is $\NP$-hard and $\mathsf{W}[2]$-hard with parameter~$k$. 
Furthermore, no FPT algorithm with~$k$ as the parameter can achieve an $f(k)$-approximation for MLVD for any computable function~$f$, 
unless $\mathsf{FPT}=\mathsf{W}[1]$. 
All of these hold even if the input graph is bipartite and preferences are strict.
\end{theorem}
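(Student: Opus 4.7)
The plan is to reduce from \HS{}, which is $\NP$-hard, $\mathsf{W}[2]$-hard with the standard parameter $k$, and admits no FPT $f(k)$-approximation under $\mathsf{FPT}\neq\mathsf{W}[1]$ (via the parameterized inapproximability of \textsc{Set Cover} / \HS{}). Given an input $(U,\mathcal{F}=\{F_1,\dots,F_m\},k)$, I would construct in polynomial time a bipartite strict preference system $\I$ with the property that $\distvert(\I)\leq k$ if and only if $\mathcal{F}$ admits a hitting set of size at most $k$. Since the reduction preserves both the parameter and the optimum value, this transports all three hardness consequences simultaneously.

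The heart of the construction is a gadget per $F_i$ whose role is to introduce a family of cycles into the preference digraph that can be destroyed within budget $k$ only by deleting a $U$-vertex from $F_i$. Fix $K=k+1$. For each $F_i=\{u^i_1,\dots,u^i_{p_i}\}$ and each copy index $t\in[K]$, I would introduce private dummy vertices arranged so that the preference digraph contains a cycle $C^{i,t}$ passing through all elements of $F_i$ in the cyclic order $u^i_1,\dots,u^i_{p_i}$, where crucially the arc of $C^{i,t}$ corresponding to the ``$j$-th slot'' is \emph{labeled} by $u^i_j$ itself. Because the label $u^i_j$ is common to the corresponding arc of every one of the $K$ parallel copies, deleting $u^i_j$ simultaneously removes that critical arc from every $C^{i,t}$, breaking all $K$ cycles at once; on the other hand, every private dummy labels an arc of a single copy only, so within a budget of $k<K$ deletions, no combination of dummy deletions can kill all $K$ copies.

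The main technical obstacle will be ensuring that the preference digraph does not acquire any \emph{spurious} cycle beyond the designed $C^{i,t}$, which could be broken more cheaply than a hitting set demands. To this end I would: (i)~use a \emph{private copy} of every $U$-vertex inside each gadget so that distinct gadgets share no non-$U$ vertices, the original $U$-vertex being linked to its private copies only through the critical $u$-labeled arcs; (ii)~insert private intermediate vertices along every $C^{i,t}$ so that any walk entering a gadget must traverse a prescribed sequence of private vertices; and (iii)~arrange the preferences of every $U$-vertex and every dummy to be consistent with a fixed global linear order on the dummies, so that all arcs in the preference digraph \emph{outside} the cycles $C^{i,t}$ form a DAG (invoking Observation~\ref{obs:pref-digraph}). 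Bipartiteness can be arranged by placing $U$ on one side and all dummy vertices on the other.

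The forward direction is then straightforward: if $X\subseteq U$ is a hitting set of size $\leq k$, then for each $i$ we have $X\cap F_i\neq\emptyset$, so deleting $X$ removes the critical $u$-labeled arc from every $C^{i,t}$, and the DAG structure of the remaining arcs yields $\I-X\in\FamML$ via Observation~\ref{obs:pref-digraph}. For the backward direction, given any MLVD solution $S$ with $|S|\leq k$, the $K=k+1$ multiplicity forces that for each $i$, the cycles $C^{i,1},\dots,C^{i,K}$ cannot all be broken by private-dummy deletions alone (since that would cost at least $K>k$), so $S$ must contain a vertex simultaneously affecting all $K$ copies, and by construction the only such vertices are the elements of $F_i\subseteq U$; hence $S\cap U$ is a hitting set of size at most $k$. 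The most delicate step will be verifying~(iii) in detail, to rule out unexpected cycles arising from the interaction of $U$-vertex preferences and dummy preferences across many gadgets.
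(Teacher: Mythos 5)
Your overall route is the same as the paper's: a reduction from \HS{} that plants, for every set, a cycle in the preference digraph that can only be destroyed (within budget) by deleting universe elements. However, as written the proposal does not establish the third claim of the theorem. Ruling out FPT $f(k)$-approximations requires an approximation-preserving reduction, i.e.\ a map that turns an \emph{arbitrary} MLVD solution $S$ --- in particular one of size $f(k)\cdot k \gg k$ returned by a hypothetical approximation algorithm --- into a hitting set of size at most $|S|$. Your justification (``the reduction preserves the parameter and the optimum value'') is not sufficient, and your backward argument only works for $|S|\le k$: a solution of size $\ge k+1$ may destroy all $K=k+1$ copies of a gadget using private dummies alone, so $S\cap U$ need not be a hitting set. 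The gap is fixable inside your construction by a charging argument (each un-hit set forces at least one private deletion in its own gadget, which pays for adding one arbitrary element of that set), but the cleaner fix is exactly what the paper does: associate each gadget-internal vertex with a specific element $u^i_j\in F_i$ and translate its deletion into the deletion of $u^i_j$. With that translation the $(k+1)$-fold duplication becomes unnecessary altogether, and the reduction is approximation-preserving by construction.

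There are also two construction-level problems to sort out. First, your item (i) introduces \emph{one} private copy of each $U$-vertex per gadget, shared by all $K$ cycle copies; deleting that single private vertex then breaks all $K$ copies, contradicting your claim that ``the only such vertices are the elements of $F_i$'' --- you need per-copy private vertices (or, again, the translation argument). Second, bipartiteness interacts with arc labels: an arc $(a,b)$ labeled $v$ requires $a,b\in N_G(v)$, so with $U$ on one side and all dummies on the other, every arc of a cycle whose endpoints are dummies must be labeled by a universe vertex; for soundness those labels must lie in $F_i$, at which point your gadget essentially collapses to the paper's single cycle $x_{S_i}^1,\dots,x_{S_i}^{|S_i|}$ where the $j$-th arc is enforced by making $u(S_i,j)$ adjacent to $x_{S_i}^j$ and $x_{S_i}^{j+1}$ with $x_{S_i}^j \prec_{u(S_i,j)} x_{S_i}^{j+1}$, preferences otherwise drawn from fixed global orders so that no spurious cycles arise. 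So the intended idea is the right one, but the write-up does not yet prove the inapproximability statement and contains the above inconsistency.
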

\begin{proof}
We present a reduction from the \HS{} problem. 
The input of \HS{} is a set~$U$ (the \emph{universe}), a family $S_1, \dots, S_m$ of subsets of~$U$, and an integer~$k$;
the task is to find a hitting set of size at most~$k$, where a \emph{hitting set} is a 
set~$H \subseteq U$ such that $H \cap S_i \neq \emptyset$ for each $i \in [m]$. 

Given~an instance~$\mathcal{H}=(U,S_1,\dots,S_m,k)$, let $u(S_i,j)$ be the $j$-th item in~$S_i$ for each $j \in [|S_i|]$. 
We construct a bipartite preference system $\I=(G,\prec)$ as follows. 
For each $i \in [m]$ and $j \in [|S_i|]$, we create an agent $x_{S_i}^j$. 
The vertex set of~$G$ is $V=X \cup U$ where $X=\{ x_{S_i}^j : i \in [m], j \in [|S_i|]\}$, and 
we define the edge set of~$G$ by connecting $u(S_i,j)$ for each $i \in [m]$ and $j \in [|S_i|]$ to both $x_{S_i}^j$ and~$x_{S_i}^{j+1}$ 
(where we define $x_{S_i}^{|S_i|+1}=x_{S_i}^1$ for each~$i$);
we set also $x_{S_i}^j \prec_{u(S_i,j)} x_{S_i}^{j+1}$. Note that $G$ is bipartite.
To define the full preferences of agents in~$U$, 
we further add that for any agent~$u \in U$ whose neighborhood contains some agents~$x_{S_i}^j$ and~$x_{S_{i'}}^{j'}$ with $i \neq i'$, 
we let $x_{S_i}^j \prec_u x_{S_{i'}}^{j'}$ exactly if $i<i'$. 
To finish the definition of the preferences in~$\I$, we fix an arbitrary
strict ordering $\prec^U$ over the universe~$U$ as a master list over~$U$, 
i.e., for any $x \in X$ we set $u \prec_x u'$ for some $u$ and~$u'$ in~$U$ if and only if~$u \prec^U u'$.

We claim that there exists a hitting set of size at most~$k$ for $\mathcal{H}$ if and only if $\distvert(\I)\leq k$.

Suppose first that $S$ is a set of at most~$k$ 
vertices in~$G$ such that $\I-S$ admits a master list~$\prec^{\textup{ml}}$. 
Let $H_S$ be the union of $S \cap U$ and $\{u(S_i,j): x_{S_i}^j \in X \cap S\}$. 
Note that~$|H_S| \leq |S| \leq k$; we show that~$H_S$ is a hitting set for~$\mathcal{H}$. 
To see this, suppose that a set $S_i$ for some $i \in [m]$ is not hit by~$H_S$, that is, $S_i \cap H_S=\emptyset$. 
Then by definition, $S_i \cup \{x_{S_i}^j:j \in [|S_i|]\}$ is disjoint from~$S$. 
We show that the sub-instance of $\I-S$ consisting of these $2|S_i|$ vertices cannot be consistent with the master list~$\prec^{\textup{ml}}$:
if there exists some $\ell \in [|S_i|-1]$ for which $x_{S_i}^\ell \not\prec^{\textup{ml}} x_{S_i}^{\ell+1}$, 
then this contradicts $x_{S_i}^\ell \prec_{u(S_i,\ell)} x_{S_i}^{\ell+1}$;
otherwise we know $x_{S_i}^1 \prec^{\textup{ml}} x_{S_i}^2  \prec^{\textup{ml}} \dots  \prec^{\textup{ml}} x_{S_i}^{|S_i|}$ which 
contradicts $x_{S_i}^{|S_i|}  \prec_{u(S_i,|S_i|)} x_{S_i}^{1}$. This proves that $H_S$ is indeed a hitting set for $\mathcal{H}$.

Suppose now that $H$ is a hitting set for $\mathcal{H}$; 
we will show that $H$ itself is a solution for our instance~$\I$ of MLVD by defining a master list~$\prec^\ml$ for $\I-H$.
For each $i \in [m]$, let~$\ell_i$ be the index for which $u(S_i,\ell_i) \in H$; such an index~$\ell_i$ exists
by the definition of a hitting set. 
Then we set $$x_{S_i}^{\ell_i+1} \prec^\ml x_{S_i}^{\ell+2} \prec^\ml \dots 
\prec^\ml x_{S_i}^{|S_i|} \prec^\ml x_{S_i}^{1} \prec^\ml \dots \prec^\ml x_{S_i}^{\ell_i}.$$  
Furthermore, we set $x_{|S_i|}^{\ell_i} \prec^\ml x_{S_{i+1}}^{\ell_{i+1}+1}$ for each $i \in [m-1]$, 
so that $\prec^\ml$ is now a strict linear order over $X$. 
It is clear that the preferences of each agent $u \in U \setminus H$ in~$\I-H$  are consistent with $\prec^\ml$.
Finally, we let $\prec^\ml$ order~$U$ according to $\prec^U$; then preferences of agents in~$U$ are consistent with $\prec^\ml$ as well, 
proving that $\I-H \in \FamML$, which in turn proves the correctness of the reduction.

Observe that the above reduction is a polynomial-time reduction as well as an FPT-reduction, 
if the parameter is~$k$ both in the instance~$\mathcal{H}$ of \HS{} and in MLVD. 
Since \HS{} is $\NP$-hard and $\mathsf{W}[2]$-complete with parameter~$k$ (see~\cite{karp-1972},\cite{downey-fellows-FPC-book}), 
this shows that MLVD is $\NP$-hard and $\mathsf{W}[2]$-hard with parameter~$k$. 
To show our inapproximability result, let $f(\mathcal{H})$ denote the instance~$\I$ created in the above reduction.
The above arguments prove that the size of an optimal solution for $f(\mathcal{H})=\I$ equals the size of a 
hitting set for~$\mathcal{H}$ of minimum size. 
Moreover, for any set~$S$ of vertices in~$G$ that is a solution for our MLVD instance~$\I$, 
the set~$H_S$ is a hitting set in~$\mathcal{H}$ with $|H_S|\leq |S|$, 
so setting $g(S)=H_S$ yields that the pair $(f,g)$ is an approximation preserving reduction from \HS{} to MLVD.
By the results of Karthik et al.~\cite{KLM2019}, we know that no FPT algorithm with parameter~$k$ can obtain an $f(k)$-approximation
for \HS{} for any computable function~$f$, unless $\mathsf{FPT}=\mathsf{W}[1]$.\footnote{In fact, 
Karthik et al.~\cite{KLM2019} deal with the \myproblem{Dominating Set} problem, 
which is a subset of the \myproblem{Set Cover} problem, which is in turn well known to be equivalent with \HS{}. 
Thus, the results in~\cite{KLM2019} carry over for \HS{}.}
Hence, our reduction implies the inapproximability statement of the theorem.
\qed
\end{proof}

\subsection{Weakly ordered preferences}
\label{sec:main-res-ties}

Let us now consider preference systems that are not necessarily strict.
The hardness results of Section~\ref{sec:main-res-strict} trivially hold for weakly ordered preferences,
so we will focus on extending the algorithmic results of the previous section. 
We start by showing that Lemma~\ref{lem:MLS-strict=FVS} can be generalized for instances with weakly ordered preferences.

\begin{lemma}
\label{lem:MLS-ties=DFAS}
For any preference system~$\I=(G,\preceq)$, $\distswap(\I) \leq k$ if and only if
there exists a set of at most~$k$ arcs in the preference digraph~$D_\I$ of~$\I$ that hits every strict cycle of~$D_\I$.
\end{lemma}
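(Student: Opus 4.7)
The plan is to mimic the two-direction argument of Lemma~\ref{lem:MLS-strict=FVS}, but now working with the extended preference digraph of Lemma~\ref{lem:pref-digraph-ties} and handling tied arcs with care. Both directions will rely on a common bookkeeping principle: each unordered pair~$\{a,b\}$ that contributes one unit to $\Delta(\preceq_v,\preceq'_v)$ will be charged to exactly one arc of~$D_\I$ with label~$v$, and this charging must be injective.

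For the forward direction, I would fix an $\I'=(G,\preceq')\in\FamML$ realizing the swap distance and let $\preceq^\ml$ be a master list for~$\I'$ given by Lemma~\ref{lem:pref-digraph-ties}. I then build $F$ one arc per contributing pair $(\{a,b\},v)$: if $a\prec_v b$ but $b\preceq'_v a$, add the strict arc $(a,b)$ labelled~$v$; if $a\sim_v b$ but $a\not\sim'_v b$, add the tied arc running against~$\preceq^\ml$ (so $(b,a)$ with label~$v$ if $a\prec'_v b$). This gives $|F|\le k$. The crux is then a short case analysis showing that any arc $(a,b)$ surviving in $D_\I-F$ with label~$v$ must satisfy $a\preceq^\ml b$, and indeed $a\prec^\ml b$ if the arc is strict — this is just unwinding \emph{why} the arc was not placed in~$F$. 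A strict cycle in $D_\I-F$ would then chain $a\preceq^\ml b$ around the cycle, forcing all of its vertices to be $\sim^\ml$-equivalent, in contradiction with the strict arc producing $a\prec^\ml b$. So $F$ hits every strict cycle.

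For the backward direction, suppose $F$ hits every strict cycle and $|F|\le k$. I would construct a master list from $D_\I-F$. The central observation is that inside any strongly connected component of $D_\I-F$, every arc must be tied: a strict arc inside an SCC lies on a cycle within that SCC, and this cycle would be a strict cycle in $D_\I-F$. I would then contract each SCC to a single vertex to obtain a DAG, fix a topological order, and lift it back to a weak order $\preceq^\ml$ on $V$ by declaring SCC-mates to be $\sim^\ml$-equivalent. Setting $\I'=(G,\preceq')$ as the restriction of~$\preceq^\ml$ to each $N_G(v)$ automatically puts $\I'\in\FamML$. To bound $\Delta(\I,\I')\le|F|$, I would injectively map each contributing pair $(\{a,b\},v)$ to an arc of~$F$: in the strict-mismatch case, the strict arc $(a,b)$ with label~$v$ cannot lie in $D_\I-F$ (it would either sit inside an SCC, which is impossible since it is strict, or traverse the topological order backward), so it belongs to~$F$; in the tie-broken case, WLOG $a\prec^\ml b$, and the tied arc $(b,a)$ with label~$v$ traverses the topological order backward, so it belongs to~$F$. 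Since an arc is uniquely identified by its endpoints, direction, and label, this map is injective.

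The main obstacle I anticipate is getting the per-pair accounting to line up exactly with the definition of~$\Delta$. Each tied preference $a\sim_v b$ generates \emph{two} arcs in $D_\I$ while contributing only one unit of distance, so one has to take care to charge a tie-breaking contribution to the single tied arc going against the final master list, never to both. Similarly on the forward side, one has to check that a pair contributing in the $a\prec_v b$ vs.\ $b\preceq'_v a$ case is charged to the strict arc and not mixed up with any tied arcs that may have been newly created in~$\I'$. Once this charging is fixed in both directions, the rest is a direct application of Lemma~\ref{lem:pref-digraph-ties} together with the SCC-topological-order construction.
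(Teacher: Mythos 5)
Your proposal is correct, and its forward direction is essentially identical to the paper's: the same per-pair charging (strict arc $(a,b)$ for a strict mismatch, the tied arc running against the new order for a broken tie) followed by the observation that every surviving arc is consistent with the master list, so a surviving strict cycle is impossible. The backward direction, however, takes a genuinely different route to build the master list from $D_\I-F$. The paper first augments $D_\I-F$ by adding a pair of tied arcs between any two vertices with no strict path in either direction, defines $a\prec^\ml b$ via strict-path reachability, and must then verify by hand that this relation is irreflexive, asymmetric, transitive, and that the induced indifference $\sim^\ml$ is transitive (the augmentation step exists precisely to force the last property). You instead observe that every arc inside a strongly connected component of $D_\I-F$ must be tied (a strict arc in an SCC would close a surviving strict cycle), contract the SCCs, and lift a topological order of the condensation to a weak order with SCC-mates tied; the weak-order axioms then come for free, so your construction avoids the paper's augmentation and the transitivity-of-indifference check. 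The resulting master lists differ (yours ties only SCC-mates, the paper's ties all strict-path-incomparable pairs), but both support the same injective charging of each unit of $\Delta(\I,\I')$ to an arc of $F$: a strict mismatch forces the strict arc into $F$ (it cannot sit inside an SCC nor run forward across distinct SCCs), and a broken tie forces the backward tied arc into $F$, exactly as you argue. So your version is a slightly more economical proof of the same equivalence, at the cost of invoking the standard SCC/condensation machinery rather than an elementary ad hoc construction.
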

\begin{proof}
Suppose first that $\distswap(\I) \leq k$. 
This means that there exists a preference system~$\I'=(G,\preceq') \in \FamML$ such that $\Delta(\I,\I') \leq k$.
Let $V$ denote the vertex set of~$G$, and let
$$S_v=\{ \{a,b\}:a \prec_v b \textrm{ but } b \preceq'_v a, \textrm{ or } a \sim_v b \textrm{ but } a \not\sim'_v b\}.$$
By definition, $$\Delta (\I,\I')=\sum_{v \in V} \Delta(\preceq_v,\preceq'_v)= \sum_{v \in V}|S_v| \leq k.$$

We construct an arc set~$F$ in~$D_\I$ by adding an arc to~$F$ for each $\{a,b\} \in S_v$ for some~$v$ as follows:
\begin{itemize}
\item[(i)] if $a \prec_v b$ but $b \preceq'_v a$, then we add the strict arc~$(a,b)$ with label~$v$ to~$F$;
\item[(ii)] if $a \sim_v b$ but $a \prec'_v b$, then we add the tied arc~$(b,a)$ with label~$v$ to~$F$. 
\end{itemize}
Note that $|F| \leq \sum_{v \in V}|S_v| \leq k$.

We claim that deleting~$F$ from~$D_\I$ yields a subgraph of~$D_{\I'}$, and additionally,
all arcs that are strict in~$D_\I$ and are not in~$F$ are also strict in $D_{\I'}$.
To see this, let us compare $D_\I$ and $D_{\I'}$. Note that both $\I$ and $\I'$ have $G$ as their underlying graph.
Therefore, it suffices to consider the changes in the preference digraph when preferences change from~$\preceq$ to~$\preceq'$, 
which amounts to considering each pair~$\{a,b\} \in \bigcup_{v \in V} S_v$. 
Let us fix some~$v$, and consider $\{a,b\} \in S_v$. 
First, if~$a \prec_v b$, then $D_\I$ contains a strict arc $(a,b)$ with label~$v$, 
but this arc has been added to~$F$ in case~(i), and hence there is no arc with label~$v$ with endpoints~$a$ and $b$ in~$D_\I-F$.
Second, if $a \sim_v b$ but $a \prec'_v b$, then $D_\I$ contains a pair of tied arcs with label~$v$ between $a$ and~$b$, 
but the one pointing from~$b$ to~$a$ has been added to~$F$ in case~(ii), and hence only the arc~$(a,b)$ with label~$v$ remains in~$D_\I-F$;
by $a \prec' b$, such an arc is also present in~$D_{\I'}$ as a strict arc.
This proves our claim. 

As a consequence, any cycle~$C$ in~$D_\I$ that contains a strict arc and is not hit by~$F$ (i.e., $A(C) \cap F=\emptyset$) 
is also present in~$D_{\I'}$, and contains an arc that is strict in~$D_{\I'}$. In other words, 
any strict cycle in~$D_\I$ corresponds to a strict cycle in~$S_{\I'}$.
However, by Lemma~\ref{lem:pref-digraph-ties} no strict cycle can be present in~$D_{\I'}$, as $\I' \in \FamML$. 
This shows that $F$ hits all strict cycles in~$D_\I$.

To prove the other direction of our statement, suppose now that~$F$ is a set of at most~$k$ arcs in~$D_\I$ that 
hits all strict cycles. 
Let us call a path in~$D_\I$ a \emph{strict path} if it contains a strict arc.
We proceed with the following iteration on~$D_\I-F$: 
as long as there exist distinct vertices $a$ and $b$ such that there is no strict path either from~$a$ to~$b$ or from~$b$ to~$a$, 
add a pair of tied arcs between $a$ and $b$. Let $D'$ be the digraph obtained at the end of this iteration. 
Observe that~$D'$ still has the property that it contains no cycle passing through a strict arc. 

Let us now define a master list~$\prec^\ml$ by setting $a \prec^\ml b$ whenever there is a strict path from~$a$ to~$b$. 
Observe that the relation~$\prec^\ml$ is irreflexive (as a strict path from~$a$ to~$a$ would yield a strict cycle), 
asymmetric (as a strict path from~$a$ to~$b$ and from~$b$ to~$a$ implies the existence of a strict cycle),
and transitive as well (as a strict path from~$a$ to~$b$ together with one from~$b$ to~$c$ 
yields a strict path from~$a$ to~$c$, due to the absence of strict cycles). 
Thus, $\prec^\ml$ is a partial order, and it remains to show that indifference w.r.t.~$\prec^\ml$, that is, $\sim^\ml$ is transitive. 
And indeed, if $a \sim^\ml b$ and $b \sim^\ml c$ for vertices $a,b,c \in V$, then as a results of our iterative process 
there must exist a pair of tied arcs between $a$ and $b$, and also between $b$ and $c$. 
Therefore, $a \prec^\ml c$ is not possible, as a strict path from~$a$ to~$c$ together with the arcs $(c,b)$ and $(b,a)$ would
imply the existence of a strict cycle in~$D'$, a contradiction. 

Let $\I'$ be the preference system whose underlying graph is~$G$ and preferences are derived from~$\preceq^\ml$; then $\I' \in \FamML$.
We claim $\Delta(\I,\I') \leq |F| \leq k$, which would imply $\distswap(\I) \leq k$, and hence would finish the proof. 
For each $v \in V$ let us define $$S_v=\{ \{a,b\}: a \prec_v b \textrm{ but } b \preceq^\ml a, \textrm{ or } a \sim_v b \textrm{ but } a \not\sim^\ml b\}.$$
Consider any pair~$\{a,b\} \in S_v$ for some $v \in V$. 
First, if $a \prec_v b$ but $a \not\prec^\ml b$, then the strict arc~$(a,b)$ with label~$v$ must be contained in~$F$, as otherwise 
it would imply a strict path from~$a$ to~$b$ in~$D'$, contradicting  $a \not\prec^\ml b$. 
Second, if $a \sim_v b$ but $a \prec^\ml b$, then the arc~$(b,a)$ (part of a pair of tied arcs in~$D_\I$) must be contained in~$F$,
as otherwise together with the strict path from~$a$ to~$b$ (whose existence follows from $a \prec^\ml b$) 
it would yield a cycle passing through a strict arc in~$D'$. 
Therefore we know $\sum_{v \in V} |S_v| \leq |F| \leq k$, proving our claim.
\qed
\end{proof}

Thanks to Lemma~\ref{lem:MLS-ties=DFAS}, we can reduce MLS to a generalization of the \FAS{} problem where, 
instead of searching for a feedback arc set, 
the task is to seek an arc set that only hits certain \emph{relevant} cycles.
In the \myproblem{Subset Feedback Arc Set} (or SFAS) problem the input is 
a directed graph~$D$, a vertex set~$W \subseteq V(D)$ and an integer~$k$, 
and the task is to find a set of at most~$k$ \emph{arcs} in~$D$ that hits  all \emph{relevant} cycles in~$D$,  
where a cycle is relevant if it goes through some vertex of~$W$. 

To solve SFAS, we apply an FPT algorithm by Chitnis et al.~\cite{CCHM15} for the vertex variant of SFAS, 
 the \myproblem{Directed Subset Feedback Vertex Set} (or DSFVS) problem that, 
given a directed graph~$D$, a set~$W \subseteq V(D)$ and a parameter~$k \in \mathbb{N}$, 
asks for a set of at most~$k$ \emph{vertices} that hits all relevant cycles in~$D$.   
Applying a simple, well-known reduction from SFAS to DSFVS, we can use the algorithm by Chitnis et al.~\cite{CCHM15} 
to obtain an FPT algorithm for MLS with parameter~$k$.

\begin{theorem}
\label{cor:MLS-ties-FPT}
MLS is fixed-parameter tractable with parameter~$k$, even if preferences are weak orders.
\end{theorem}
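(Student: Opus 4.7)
The plan is to chain three reductions and ultimately invoke the FPT algorithm of Chitnis et al.~\cite{CCHM15} for \myproblem{Directed Subset Feedback Vertex Set} (DSFVS):
$$\text{MLS} \;\longrightarrow\; \text{arc-hitting for strict cycles in } D_\I \;\longrightarrow\; \text{SFAS} \;\longrightarrow\; \text{DSFVS}.$$
The first arrow is handed to us by Lemma~\ref{lem:MLS-ties=DFAS}: deciding $\distswap(\I)\le k$ is equivalent to finding at most $k$ arcs of the preference digraph~$D_\I$ that hit every strict cycle. From here the goal is to recast this as an SFAS instance and then as a DSFVS instance.

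For the reduction to SFAS, I would encode the predicate ``contains a strict arc'' as ``passes through a distinguished vertex.'' Subdivide every strict arc $(a,b)$ of~$D_\I$ by inserting a fresh vertex $w_{(a,b)}$, replacing $(a,b)$ by the two-arc path $(a, w_{(a,b)}, b)$; call the resulting digraph~$D'$ and collect all such subdivision vertices into~$W$. Then a cycle of~$D'$ passes through some $w\in W$ if and only if its image in~$D_\I$ (obtained by lifting each $w_{(a,b)}$) is a strict cycle. Moreover, an arc solution of size~$k$ in~$D'$ translates back to an arc solution of size~$k$ in~$D_\I$ by identifying each subdivided half-arc with the original strict arc it came from, and vice versa. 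Thus the arc-hitting task becomes exactly the SFAS instance $(D',W,k)$.

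For the reduction from SFAS to DSFVS, I would apply the standard line-digraph construction: form $L(D')$ whose vertices are the arcs of~$D'$, with an arc from $e_1$ to $e_2$ whenever the head of $e_1$ coincides with the tail of $e_2$, and set $W_L := \{e \in A(D') : e \text{ is incident to some } w \in W\}$. Cycles of $L(D')$ correspond to cycles of~$D'$ under the natural bijection, and the $W_L$-cycles correspond precisely to the $W$-cycles. Hence a size-$k$ vertex set of $L(D')$ hitting all $W_L$-cycles is exactly a size-$k$ arc set of~$D'$ hitting all $W$-cycles. Running the $f(k)\cdot |L(D')|^{O(1)}$-time algorithm of Chitnis et al.~\cite{CCHM15} on the resulting DSFVS instance decides it, completing the proof.

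The main item to check carefully is that parallel arcs of~$D_\I$ arising from distinct labels, and the multi-arcs that persist in $D'$, do not corrupt the cycle correspondences in either reduction; a short direct check suffices, since each parallel arc in~$D_\I$ yields its own subdivision vertex (if strict) and its own distinct vertex of~$L(D')$, so cycles are counted with the correct multiplicity on both sides. Since every step is a parameter-preserving many-one reduction, the composite transforms MLS into DSFVS while keeping the parameter~$k$, establishing the claimed fixed-parameter tractability.
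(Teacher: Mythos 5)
Your proposal is correct and follows the same overall route as the paper's proof: start from Lemma~\ref{lem:MLS-ties=DFAS}, recast the strict-cycle-hitting task as a \myproblem{Subset Feedback Arc Set} instance, reduce that to \myproblem{Directed Subset Feedback Vertex Set}, and invoke the FPT algorithm of Chitnis et al.~\cite{CCHM15}. Two points of comparison. First, you spell out a step the paper leaves implicit, namely how ``hit every strict cycle of $D_\I$'' becomes an SFAS instance: subdividing each strict arc and letting $W$ be the set of subdivision vertices is exactly the right construction, and your check that each parallel strict arc gets its own subdivision vertex is the relevant one. Second, for the SFAS-to-DSFVS step you use the line digraph, whereas the paper subdivides every arc by a vertex $a_e$ and replaces each original vertex by $k+1$ copies (with $W'$ the copies of $W$), so that deleting original vertices is pointless and deleting $a_e$ simulates deleting the arc~$e$; both reductions are standard and parameter-preserving, the paper's paying a factor-$(k+1)$ blow-up for an immediate cycle correspondence, yours being more compact. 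The one place you should tighten your argument is the claim that ``cycles of $L(D')$ correspond to cycles of $D'$ under the natural bijection'': cycles of the line digraph correspond to closed walks of $D'$ with distinct arcs, which may repeat vertices, so there is no bijection with simple cycles. The equivalence you need nevertheless holds: decompose such a closed walk into arc-disjoint simple cycles and note that the cycle containing an arc incident to some $w\in W$ itself passes through~$w$; hence a vertex set of $L(D')$ meeting all $W_L$-cycles is an arc set of $D'$ meeting all relevant cycles, and conversely. With that decomposition argument stated in place of the bijection claim, your proof is complete.
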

\begin{proof}
Due to Lemma~\ref{lem:MLS-ties=DFAS}, we know that we can reduce MLS to an instance of the \myproblem{Subset Feedback Arc Set} problem.
To reduce SFAS to~DSFVS (that is, \myproblem{Directed Subset Feedback Vertex Set}), 
we will apply a well-known reduction that is often used to reduce arc-deletion problems 
to vertex-deletion problems: given our input instance~$(D,W,k)$ for SFAS 
we first subdivide each arc~$e \in A(D)$ by a new vertex~$a_e$, 
and then replace each vertex~$v \in V(D)$ by a set~$X_v=\{v^1,\dots, v^{k+1}\}$ of vertices containing $k+1$ copies of~$v$,
with each~$v^i$ having the same in- and out-neighbors as~$v$ (after the subdivisions).
Let $D'$ denote the directed graph we obtain. 
It is not hard to see that $(D,W,k)$ is a `yes'-instance of SFAS if and only if $(D',W',k)$ is a `yes'-instance of DSFVS 
where $W'=\bigcup\{X_v:v \in W\}$, as it makes no sense to delete any vertex in $X_v$ for some $v \in V(D)$
(since at least one of the $k+1$ copies of~$v$ in~$X_v$ will survive), 
and deleting a vertex~$a_e$ in~$D'$ is equivalent with deleting the arc~$e$ in~$D$.
Thus, by the FPT algorithm of Chitnis et al.~\cite{CCHM15} for DSFVS, we know that SFAS is also FPT with parameter~$k$.
\qed
\end{proof}

Next we extend Theorem~\ref{thm:MLED-2approx-FPT}  for weak orders, by reducing MLED to SFAS.
\begin{theorem}
\label{thm:MLED-ties-2approx-FPT}
There exists an algorithm that achieves a 2-approximation for MLED, 
and runs in FPT time with parameter~$k$. 
\end{theorem}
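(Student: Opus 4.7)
The plan is to generalize the construction from the strict case (Theorem~\ref{thm:MLED-2approx-FPT}) so that the auxiliary digraph encodes both strict and tied preferences, and to invoke an SFAS solver rather than FAS, exploiting the characterization of $\FamML$ via strict cycles in Lemma~\ref{lem:pref-digraph-ties}. The FPT algorithm of Chitnis et al.~\cite{CCHM15} for DSFVS, together with the standard subdivision reduction recalled in Theorem~\ref{cor:MLS-ties-FPT}, shows that SFAS is FPT in its natural parameter, so any such call will fit inside an FPT budget.

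\emph{Construction.} I would keep the vertex set $V\cup\{a_v^+,a_v^-:\{a,v\}\in E\}$ and the edge arcs $(a_v^-,a)$, $(a,a_v^+)$ unchanged from the strict case. For every $v\in V$ and every pair $a,b\in N_G(v)$ with $a\prec_v b$ I would insert a length-two path $a_v^+\to m_{a,b,v}\to b_v^-$ through a fresh marker vertex $m_{a,b,v}$; for every pair with $a\sim_v b$ I would add only the two unmarked arcs $(a_v^+,b_v^-)$ and $(b_v^+,a_v^-)$. Let $W$ be the set of all marker vertices. Contracting, for each $v\in V$, the set $\{v\}\cup\{v_a^\pm:a\in N_G(v)\}$ turns $H_\I$ into the preference digraph $D_\I$ of Lemma~\ref{lem:pref-digraph-ties}, where strict arcs carry a marker and tied arcs do not. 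Hence a cycle of $H_\I$ visits $W$ if and only if the corresponding cycle in $D_\I$ is strict.

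\emph{Algorithm.} Run the FPT algorithm for SFAS on $(H_\I,W,k)$ to obtain a minimum arc set $F$ hitting every $W$-cycle. Normalize $F$ so that every arc is incident to some vertex in $V$: any arc $(a_v^+,m_{a,b,v})$ or $(m_{a,b,v},b_v^-)$ can be replaced by $(b_v^-,b)$, and any tied arc $(a_v^+,b_v^-)$ by $(b_v^-,b)$ as well, since $b_v^-$ has $(b_v^-,b)$ as its only outgoing arc and every $W$-cycle through these arcs must use it. Return the edge set $S_F=\{\{a,v\}\in E:(a_v^-,a)\in F\text{ or }(a,a_v^+)\in F\}$. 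For correctness, deleting $S_F$ from $\I$ removes all vertices $a_v^\pm$ with $\{a,v\}\in S_F$ from $H_\I$, hence all surviving arcs of $F$; so $H_{\I-S_F}$ contains no $W$-cycle, and by the correspondence above $D_{\I-S_F}$ contains no strict cycle, giving $\I-S_F\in\FamML$ via Lemma~\ref{lem:pref-digraph-ties}.

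\emph{Approximation bound.} For the factor of $2$, given an optimal deletion set $S^*$ and a master list $\preceq^\ml$ for $\I-S^*$, I would reprove the strict-case charging: for each $\{a,b\}\in S^*$, compare the position of $b$ in $\preceq_a$ with its position in the restriction of $\preceq^\ml$ to $N_G(a)$, and mark $(b,b_a^+)$ or $(b_a^-,b)$ according to whether $b$ must move up or down in $a$'s list to align with $\preceq^\ml$; do the symmetric thing for $a$ in $b$'s list. The resulting set $M$ has at most $2|S^*|$ edge arcs, and any $W$-cycle surviving $M$ would, as in Theorem~\ref{thm:MLED-2approx-FPT}, translate into a strict cycle in the preference digraph of the repositioned instance $\I'$ whose preferences are consistent with $\preceq^\ml$, contradicting Lemma~\ref{lem:pref-digraph-ties}. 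The main obstacle will be the case analysis justifying this final step: in the presence of ties, each edge of $S^*$ can trigger any of several qualitatively distinct local changes (reversing a strict order, breaking or creating a tie, or leaving a pair unchanged), and one has to check in every case that the two chosen marks suffice to destroy every strict cycle of $H_\I$ that is relevant to $\{a,b\}$. Once verified, the running time is dominated by the SFAS call on a graph of size $O(|V|\cdot|E|)$, which is FPT in $k$ by~\cite{CCHM15}.
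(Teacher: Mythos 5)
Your proposal is correct and follows essentially the same route as the paper: an auxiliary digraph with $a_v^\pm$ vertices, marker vertices on strict comparisons, an SFAS call (via the DSFVS reduction of Chitnis et al.), normalization of the arc set to edge arcs, and the same two-arc charging argument against an optimal deletion set, with Lemma~\ref{lem:pref-digraph-ties} supplying the strict-cycle characterization. The only deviation is that you encode each tied pair by two direct arcs $(a_v^+,b_v^-)$, $(b_v^+,a_v^-)$ instead of the paper's per-tie hub vertex in $T$ (a bidirected star), which is immaterial to correctness or the FPT bound, and the ``obstacle'' you flag is handled in the paper exactly by your sketched argument, via decomposing a surviving cycle into the two path types and observing that unmarked arcs certify that the relative (weak or strict) order of the endpoints is preserved in the repositioned instance.
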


\noindent
{\bf 2-approximation FPT algorithm for MLED.}
Let the preference system~$\I$ with underlying graph $G=(V,E)$ and $k \in \mathbb{N}$ be our input for MLED.
For each vertex~$v \in V$, let~$T_v$ be the set family containing every tie that appears in the preferences of~$v$. 
See Algorithm~\ref{algo:MLED-2approx} for a formal description. 

First, we construct a directed graph $H_\I$ with $V(H_\I)=V \cup T \cup U \cup Z$
where 
\begin{eqnarray*}
T &=& \{t:v \in V, t \in T_v\}, \\
U &=& \{a_v^-, a_v^+: \{a,v\} \in E\}, \\
Z &=& \{z_{(a,b,v)}: a \prec_v b \textrm{ for some $a,b,v \in V$} \}, 
\end{eqnarray*}
and with arc set $A(H_\I)=A_T \cup A_U \cup A_Z$ where 
\begin{eqnarray*}
A_T &=& \{(t,a_v^-),(a_v^+,t):v \in V, t \in T_v ,a \in t\} \\
A_U &=& \{(a_v^-,a),(a,a_v^+):v \in V, a \in N_G(v)\} \\
A_Z &=& \{(a_v^+,z_{(a,b,v)}), (z_{(a,b,v)},b_v^-): z_{(a,b,v)} \in Z \}. 
\end{eqnarray*}

Next, we solve the \myproblem{Subset Feedback Arc Set} problem~$(H_\I,Z,k)$ by applying the above reduction from SFAS to DSFVS 
and then using the algorithm of Chitnis et al.~\cite{CCHM15};
let~$F$ be the solution obtained for $(H_\I,Z,k)$. 
Observe that  we may assume that $F$ only contains arcs of~$A_U$. 
Indeed we can replace any arc~$f \in F$ in~$A_T \cup A_Z$ by an appropriately chosen arc~$f' \in A_U$:
note that $f$ either points to some $a_v^- \in U$ or it leaves some $a_v^+ \in U$;
in the former case we set~$f'=(a_v^-,a)$, while in the latter case we set $f'=(a,a_v^+)$. 
Then any cycle containing~$f$ must also contain~$f'$, so we can safely replace~$f$ with~$f'$, 
as~$F \setminus \{f\} \cup \{f'\}$ still hits all relevant cycles. 
Hence, we will assume $F \subseteq A_U$.

Finally, we return the set $S_F=\{\{a,v\} \in E: (a,a_v^+) \in F \textrm{ or } (a,a_v^+) \in F\}$.

\begin{algorithm}[t]
\caption{Obtaining a 2-approximation for MLED on input~$(\I,k)$}
\label{algo:MLED-2approx}
\begin{algorithmic}[1]
\State Construct the graph $H_\I$.
\State Let $F$ be a solution for \myproblem{Subset Feedback Arc Set} on input~$(H_\I,Z,k)$.
\State Ensure $F \subseteq A_U$ by replacing all arcs pointing to some $a_v^- \in U$ with $(a_v^-,a)$ 
and all arcs leaving some $a_v^+ \in U$ with $(a,a_v^+)$.
\State Return $S_F=\{\{a,v\} \in E: (a,a_v^+) \in F \textrm{ or } (a_v^-,a) \in F\}$.
\end{algorithmic}
\end{algorithm}

\begin{proof}[of Theorem~\ref{thm:MLED-ties-2approx-FPT}]
It is clear that Algorithm~\ref{algo:MLED-2approx} runs in FPT time with parameter~$k$, so it suffices to show that it yields a 2-approximation.
Let us show that if there exists a set~$S$ of at most~$k$ edges in~$G$ such that~$\I-S$ admits a master list~$\prec^\ml$, 
then Algorithm~\ref{algo:MLED-2approx} returns a solution for~$\I$ of size at most~$2|S| \leq 2k$.

Take any vertex~$a \in V$ incident to some edge in~$S$; let $\{a,b_1\}, \dots, \{a,b_t\}$ be the edges of~$S$ incident to~$a$.
Instead of deleting these edges, it is also possible to ``move'' the vertices in $B=\{b_1, \dots, b_t\}$ 
within the preferences of~$a$ 
to obtain an instance $\I'$ whose preferences are consistent with~$\prec^\ml$: 
when representing $a$'s preferences as a list possibly containing ties, each $b \in B$ needs to be moved 
either to the left (becoming more preferred by~$a$) or to the right (becoming less preferred by~$a$). 
In the former case we mark the arc~$(b,b_a^+)$, and in the latter case we mark~$(b_a^-,b)$. 
Clearly, we have marked at most~$2|S|\leq 2k$ arcs in~$H_\I$, since for each edge~$\{a,b\} \in S$ 
we may need to move $a$ in $b$'s preferences, 
and also to move $b$ in $a$'s preferences, and each such move results in marking one arc. 

We claim that the set $M$ of marked arcs hits every relevant cycle in~$H_\I$, that is, 
$H_{\I}-M$ contains no cycle passing through a vertex of~$Z$. 
Suppose for contradiction that there is a cycle $C$ in~$D_{\I}-M$ going through a vertex of~$Z$. 
Let us consider the decomposition of~$C$ into paths $P_1, \dots, P_r$ such that $V \cap V(P_i)$ 
contains exactly the endpoints of $P_i$ for each $i \in [r]$. 
Let $P_i$ be such a path from~$a$ to~$b$. Observe that there are two possible forms $P_i$ can take:
either 
\begin{itemize}
\item[(i)] $P_i=(a,a^+_v,t,b^-_v,b)$ for some $v \in V$ and $t \in T_v$, or
\item[(ii)] $P_i=(a,a^+_v,z_{(a,b,v)},b^-_v,b)$. 
\end{itemize}
In both cases it follows that $a \preceq^\ml b$ holds: 
first, $a \preceq_v b$ holds in~$G$ by construction (since $a \sim_v b$ in case (i) and $a \prec_v b$ in case (ii)), 
and we have not moved $a$ up in~$v$'s preferences (because $(a,a_v^+) \notin M$), 
nor have we moved $b$ down in~$v$'s preferences (because $(b_v^-,b) \notin M$), so $v$ still weakly prefers $b$ to~$a$ 
in the instance~$\I'$ whose preferences are consistent with~$\prec^\ml$. 
Furthermore, in case (ii) we know $a \prec_v b$, and using the same reasoning we obtain $a \prec^\ml b$ as well.
Therefore, the cycle~$C$ implies a cycle in the preference digraph~$D_{\I'}$ of~$\I'$, 
and since $C$ contains a vertex of~$Z$, at least one arc of this cycle is a strict arc
(namely, if $z_{(a,b,v)} \in V(C)$, then the arc $(a,b)$ with label~$v$ is a strict arc in $D_{\I'}$), 
which contradicts~$\I' \in \FamML$ by Lemma~\ref{lem:pref-digraph-ties}.
Thus, $M$ hits every relevant cycle in $H_\I$, and has size at most~$2k$.

This implies that Algorithm~\ref{algo:MLED-2approx} will produce an arc set~$F$ for~$H_{\I}$ of size at most~$2k$ on line~2.
It remains to show that the set~$S_F$ is indeed a solution for our instance~$\I$ of MLED.
For this, we will need the observation that, due to the construction of~$H_\I$, 
we can obtain the preference digraph $D_\I$ from $H_\I$ by applying the following operations:
\begin{enumerate}
\item For each $v \in V$, contract all arcs running between~$v$ and $U$. 
Note that each such arc is either the unique outgoing or the unique incoming arc of some vertex of~$U$.
We will keep referring to the node resulting from these contractions as~$v$, corresponding directly to the vertex~$v$ in~$D_\I$. 
\item For each $t \in T_v$ for some $v \in V$, replace~$t$ and its neighborhood in~$H_\I$
(which induces a bidirected star with center~$t$) with a bidirected clique on $N_{H_\I}(t)$, i.e., 
for each $\{a,b\} \subseteq t$, add the pair $(a,b)$ and $(b,a)$ of arcs with label~$v$. 
These arcs correspond to a pair of tied arcs in~$D_\I$.  
\item Lift each $z_{(a,b,v)} \in Z$, i.e., delete $z_{(a,b,v)}$ and add an arc $(a,b)$ with label~$v$. 
These arcs correspond to strict arcs in~$D_\I$.
\end{enumerate}
It follows that there is a bijection between the cycles in~$H_\I$ passing through some vertex of~$Z$
and the cycles in~$D_\I$ passing through a strict arc.

Now, consider how the digraph~$H_\I$ changes as a result of deleting the edges of~$S_F$ from~$\I$.
Note that the deletion of~$S_F$ ensures the removal of all arcs in~$F$
(since deleting an edge~$\{a,b\}$ from~$G$ results in the removal of
vertices $a_b^+, a_b^-, b_a^+,$ and $b_a^-$ and all arcs incident to them from~$H_\I$),
so the resulting digraph~$H_{\I-S_F}$ contains no cycles passing through a vertex of~$Z$. 
As we proved in the previous paragraph, $H_{\I-S_F}$ contains no cycles passing through a vertex of~$Z$ 
if and only if the preference digraph of~$\I-S_F$ contains no cycles passing through a strict arc. 
Therefore by Lemma~\ref{lem:pref-digraph-ties} we get $\I-S_F \in \FamML$.
\qed
\end{proof}

\section{Applications}
\label{sec:applic}

In this section we consider two examples related to stable and popular matchings where 
we can efficiently solve computationally hard optimization problems 
on preference systems that are close to admitting a master list.

\subsection{Optimization over stable matchings}
\label{sec:enum-bounds}
One of the most appealing property of the distances defined in Section~\ref{sec:initial_results} is that 
whenever the distance of a \emph{strict} (but not necessarily bipartite) preference system from admitting a master list is small, 
we obtain an upper bound on the number of stable matchings contained in the given preference system. 
Therefore, strict preference systems that are close to admitting a master list 
are easy to handle, as we can efficiently enumerate their stable matchings, 
as Lemmas~\ref{lem:SM-enum-edge} and~\ref{lem:SM-enum-vert}~show.
We omit presenting a proof for these lemmas, because we will prove more generals versions of them in Section~\ref{sec:applic-pop},
namely, Lemmas~\ref{lem:enum-fixedblocks-edge} and~\ref{lem:enum-fixedblocks-vert}.

\begin{mylemma}
\label{lem:SM-enum-edge}
Given a strict preference system $\I=(G,\prec)$ with~$G=(V,E)$ 
and a set $S \subseteq E$ of edges such that $\I-S \in \FamML$, 
the number of stable matchings in~$\I$ is at most $2^{|S|}$, and 
it is possible to enumerate all of them in time $2^{|S|} \cdot O(|E|)$.  
\end{mylemma}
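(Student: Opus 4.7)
The plan is to show that each stable matching $M$ of~$\I$ is uniquely determined by the subset $T := M \cap S$; since $S$ has only $2^{|S|}$ subsets, this gives the bound immediately, and enumeration becomes a matter of trying each $T \subseteq S$ and reconstructing~$M$.

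I would start with an auxiliary claim: \emph{any strict preference system $\J \in \FamML$ has a unique stable matching, which can be computed in $O(|E(\J)|)$ time.} Fix a strict master list~$\prec^\ml$ and process vertices in its order. The currently top-ranked vertex~$v$ is either isolated in the current sub-instance (in which case it is unmatched in every stable matching, and we recurse on $\J - v$), or has a most preferred neighbor~$u$, which is necessarily the highest-ranked neighbor of~$v$ in~$\prec^\ml$. Because $v$ is the top vertex overall, it is also $u$'s most preferred neighbor, so $\{v,u\}$ blocks every matching that does not contain this edge. Hence every stable matching of~$\J$ includes $\{v,u\}$, and we can recurse on $\J - \{v,u\} \in \FamML$; this finishes the claim by induction on $|V(\J)|$.

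Now fix any matching~$T \subseteq S$, let $V_T$ be the set of vertices saturated by~$T$, and define $\I_T := \I - V_T - (S \setminus T)$. A short computation (using that $T$ is itself incident to~$V_T$ and therefore removed in the first step) shows that $\I_T$ equals $(\I - S) - V_T$, hence $\I_T \in \FamML$ because $\FamML$ is closed under sub-instances. Let $M_T$ denote its unique stable matching from the auxiliary claim. I then claim that \emph{every stable matching~$M$ of~$\I$ satisfies $M = T \cup M_T$ with $T = M \cap S$.} Indeed, the edges of $M \setminus T$ lie in~$E \setminus S$ and are disjoint from $V_T$, so they form a matching in $\I_T$. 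If some edge $\{a,b\}$ of~$\I_T$ blocked~$M \setminus T$, then, since $a$ and $b$ lie outside~$V_T$, their partners in~$M$ would be the same as in~$M \setminus T$, so $\{a,b\}$ would also block~$M$ in~$\I$, a contradiction. Hence $M \setminus T$ equals the unique stable matching~$M_T$ of~$\I_T$, which proves the claim.

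With these pieces, the bound on the number of stable matchings is immediate: distinct stable matchings yield distinct values of~$T$. For the enumeration, I would precompute a master list of $\I - S$, iterate over all $2^{|S|}$ subsets $T \subseteq S$, discard the non-matchings, compute~$M_T$ greedily on $\I_T$ in $O(|E|)$ time, and verify stability of $T \cup M_T$ in~$\I$ in $O(|E|)$ time, yielding total running time $2^{|S|} \cdot O(|E|)$. The main obstacle is the uniqueness argument of the auxiliary claim; once it is in place, the rest is a clean bookkeeping argument. One subtlety worth handling carefully is the case when the top master-list vertex has no surviving neighbors at some stage of the recursion, which is why the greedy procedure must allow vertices to remain unmatched.
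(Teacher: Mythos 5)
Your proof is correct and follows essentially the same route as the paper: the paper derives this lemma as the $B=\emptyset$ case of a more general statement (allowing a prescribed set $B$ of blocking edges), and its algorithm likewise guesses $M\cap S$, deletes $S$ together with the matched endpoints, takes the unique stable matching of the resulting master-list sub-instance, and verifies the combined matching. The only difference is that you prove the uniqueness-and-greedy-computation fact for master-list instances directly, whereas the paper cites Irving et al.\ and notes in a footnote that the greedy argument extends beyond the bipartite case.
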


Observe that Corollary~\ref{lem:SM-enum-swap} follows immediately from Lemma~\ref{lem:SM-enum-edge} and the fact that 
$\distedge(\I) \leq \distswap(\I)$ for any preference system~$\I$ (as we know by Lemma~\ref{prop:comparing-modulators}). 

\begin{mycorollary}
\label{lem:SM-enum-swap}
In a strict preference system~$\I$, the number of stable matchings is at most~$2^{\distswap(\I)}$.
\end{mycorollary}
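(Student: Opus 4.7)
The plan is to derive this corollary directly by chaining the two results invoked in the author's remark, without doing any new combinatorial work. The statement is purely about strict preferences, so Lemma~\ref{lem:SM-enum-edge} applies in its original form. The inequality on distance measures in Lemma~\ref{prop:comparing-modulators} gives me the slack I need to replace the edge-deletion modulator size by the swap distance.

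Concretely, I would proceed as follows. First, by Lemma~\ref{prop:comparing-modulators} we have $\distedge(\I) \leq \distswap(\I)$. Let $S \subseteq E(G)$ be a minimum edge-deletion modulator, so that $|S| = \distedge(\I)$ and $\I - S \in \FamML$. Such a set exists by the definition of $\distedge$. Then I apply Lemma~\ref{lem:SM-enum-edge} to the pair $(\I, S)$ to obtain that the number of stable matchings of $\I$ is at most $2^{|S|}$. Combining these two facts yields
\[
|\mathcal{S}(\I)| \;\leq\; 2^{|S|} \;=\; 2^{\distedge(\I)} \;\leq\; 2^{\distswap(\I)},
\]
which is exactly what the corollary claims.

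There is no serious obstacle here; the only thing to check is that Lemma~\ref{lem:SM-enum-edge} is applicable, which is immediate because $\I$ is strict by hypothesis and $S$ has been chosen to certify $\I - S \in \FamML$. I would keep the argument as a two-line proof, noting explicitly that the strictness assumption of the corollary is needed only because Lemma~\ref{lem:SM-enum-edge} itself requires it, whereas Lemma~\ref{prop:comparing-modulators} holds for arbitrary preference systems.
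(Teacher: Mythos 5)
Your proposal is correct and matches the paper's own argument exactly: the paper also derives the corollary immediately from Lemma~\ref{lem:SM-enum-edge} applied to a minimum edge-deletion set together with the inequality $\distedge(\I) \leq \distswap(\I)$ from Lemma~\ref{prop:comparing-modulators}. Nothing further is needed.
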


Notice that although the number of stable matchings may grow exponentially as a function of
the distance~$\distedge$ or $\distswap$, this growth does not depend on the size of the instance. 
By contrast, this is not the case for the distance~$\distvert$, as stated by Lemma~\ref{lem:SM-enum-vert} below.

\begin{mylemma}
\label{lem:SM-enum-vert}
Given a strict preference system $\I=(G,\prec)$ with~$G=(V,E)$ 
and a set $S \subseteq V$ of vertices such that $\I-S \in \FamML$, 
the number of stable matchings in~$\I$ is at most $|V|^{|S|}$, and 
it is possible to enumerate all stable matchings of~$\I$ in time $|V|^{|S|} \cdot O(|E|)$.  
\end{mylemma}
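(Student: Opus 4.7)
The approach is to show that every stable matching $M$ of $\I$ is uniquely determined by its restriction $M|_S$---the list of partners of the vertices in $S$ (with the convention that each vertex of $S$ may be ``unmatched''). Since each vertex in $S$ has at most $|V|$ possibilities for its entry in $M|_S$, this immediately yields the bound $|V|^{|S|}$ on the number of stable matchings of $\I$.

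To prove the uniqueness claim, fix a stable matching $M$, let $P$ denote the set of non-$S$ vertices matched to $S$ by $M$, and put $V' = V \setminus (S \cup P)$. I would construct a sub-instance $\I^\star$ of $\I-(S \cup P)$ by further deleting, for each $v \in V'$ and each $p \in P \cap N_G(v)$ with $M(p) \prec_p v$, every edge $\{v, u\}$ with $u \in V'$ and $u \prec_v p$. Since $\I - S \in \FamML$ by assumption and $\FamML$ is closed under vertex and edge deletion, $\I^\star$ is a strict preference system in $\FamML$, so it admits a unique stable matching, which can be computed in time~$O(|E|)$ by iteratively matching the top remaining vertex of the master list to its most preferred surviving neighbor.

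The core claim is that $M \cap E(G[V'])$ coincides with this unique stable matching of $\I^\star$. First, no edge of $M \cap E(G[V'])$ is deleted during the construction of $\I^\star$: if an edge $\{a,b\}$ were deleted witnessed by some $p \in P$, then $M(p) \prec_p a$ together with $b \prec_a p$ would make $\{a,p\}$ a blocking pair for~$M$ in~$\I$, contradicting stability. Second, any blocking pair $\{a,b\} \subseteq V'$ for $M \cap E(G[V'])$ in $\I^\star$ also blocks $M$ in $\I$, because each $v \in V'$ has the same partner (or is unmatched) in $M$ and in $M \cap E(G[V'])$---by definition of $V'$, vertex $v$ is not matched to any vertex of~$S$. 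Combined with uniqueness of the stable matching of $\I^\star$, this shows that $M|_{V'}$ is recovered from $M|_S$, and hence so is $M$.

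For the enumeration algorithm, I would iterate over all $|V|^{|S|}$ candidates for $M|_S$; for each, compute $P$, build $\I^\star$, find its stable matching in $O(|E|)$ time, glue it with $M|_S$, and check stability of the resulting matching in $\I$ in $O(|E|)$ time, outputting it upon success. This gives the advertised $|V|^{|S|} \cdot O(|E|)$ running time. The main technical subtlety is the design of~$\I^\star$: potential blocking pairs between $V'$ and~$P$ must be eliminated by appropriate edge deletions inside~$V'$, so that the single stable matching of $\I^\star$ is precisely~$M|_{V'}$.
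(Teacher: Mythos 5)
Your proposal is correct and follows essentially the same route as the paper: the paper (via the more general Lemma~\ref{lem:enum-fixedblocks-vert} with $B=\emptyset$) also guesses the edges of $M$ incident to~$S$, deletes $S$ together with the matched partners, computes the unique stable matching of the remaining master-list sub-instance, glues it to the guess, and verifies stability. The only deviation is your extra edge deletions inside~$V'$, which are harmless but unnecessary: since $P$ is deleted, any blocking pair for $M|_{V'}$ in $\I-(S\cup P)$ lies entirely within~$V'$ and would already block~$M$ in~$\I$, so $M|_{V'}$ is the unique stable matching of $\I-(S\cup P)$ without any further modification.
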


There exists an algorithm by Gusfield and Irving~\cite{Gusfield88,GusfieldIrving1989} that outputs the set~$\S(\I)$ of stable matchings 
in a preference system~$\I$ over a graph $G=(V,E)$ in~${O(|\mathcal{S}(\I)| \!\cdot\! |E|)}$ time
after $O(|V|\cdot|E| \log |V|)$ preprocessing time.
As a consequence, the bounds of Lemma~\ref{lem:SM-enum-edge}, Corollary~\ref{lem:SM-enum-swap}, and Lemma~\ref{lem:SM-enum-vert} 
on~$|\S(\I)|$ directly yield a way to handle computationally hard problems on 
any preference system~$\I$ where $\distswap(\I)$, $\distedge(\I)$, or $\distvert(\I)$ has small value,
even without the need to determine a set~$S$ of edges or vertices for which $\I-S \in \FamML$ 
or a set $S$ of swaps for which $I \lhd S \in \FamML$.
Thus, we immediately have the following result, even without having to use our results in Section~\ref{sec:main-res}.
For the definitions of the $\mathsf{NP}$-hard problems mentioned as an example in Theorem~\ref{thm:stability-FPT-XP},
see the book~\cite{Manlove2013}.

\begin{theorem}
\label{thm:stability-FPT-XP}
Let $\I$ be a strict (but not necessarily bipartite) preference system, and $Q$ any optimization problem 
where the task is to maximize or minimize some function $f$ over $\S(\I)$ such that
$f(M)$ can be computed in polynomial time for any matching~$M \in \S(\I)$. 
Then $Q$ can be solved 
\begin{itemize}
\item[(i)] \vspace{-4pt} in FPT time with parameter~$\distedge(\I)$ or~$\distswap(\I)$;
\item[(ii)] in polynomial time if $\distvert(\I)$ is constant.
\end{itemize}
\vspace{-4pt} 
In particular, these results hold for 
\myproblem{Sex-Equal Stable Matching}, \myproblem{Balanced Stable Matching}, \myproblem{(Generalized) Median Stable Matching}\footnote{%
Although the problem of finding a (generalized) median matching is not  an optimization problem over $\S(\I)$, 
it is clear that it can be solved in $|\S(\I)|\cdot O(|\I|)$ time.
},
\myproblem{Egalitarian Stable Roommates}, and \myproblem{Maximum-Weight Stable Roommates}.
\end{theorem}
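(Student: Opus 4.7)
The plan is to reduce the theorem to the preceding enumeration lemmas combined with the Gusfield--Irving enumeration algorithm~\cite{Gusfield88,GusfieldIrving1989}. The key observation, emphasised just before the statement, is that for every strict preference system $\I=(G,\prec)$ with $G=(V,E)$, this algorithm outputs $\S(\I)$ in time $O(|\S(\I)|\!\cdot\!|E|)$ after $O(|V|\!\cdot\!|E|\log|V|)$ preprocessing, \emph{without} needing any modulator certifying closeness to $\FamML$. Hence once we have an upper bound on $|\S(\I)|$ in terms of the desired distance measure, we obtain an algorithm that enumerates $\S(\I)$, evaluates $f$ on each stable matching in polynomial time, and returns the optimum.

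For part~(i), I would handle the two parameters uniformly. For parameter $\distedge(\I)$, Lemma~\ref{lem:SM-enum-edge} (applied with an optimal edge set $S$, even though we never actually compute it) gives $|\S(\I)| \leq 2^{\distedge(\I)}$. For parameter $\distswap(\I)$, Corollary~\ref{lem:SM-enum-swap} gives $|\S(\I)| \leq 2^{\distswap(\I)}$. In either case, running Gusfield--Irving on $\I$ takes time $2^{k}\cdot|E|^{O(1)}$, where $k$ is the parameter; adding $|\S(\I)|$ polynomial-time evaluations of~$f$ keeps the total running time FPT.

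For part~(ii), I would invoke Lemma~\ref{lem:SM-enum-vert} to get $|\S(\I)|\leq |V|^{\distvert(\I)}$. When $\distvert(\I)$ is a constant $c$, this is polynomial in $|V|$, so enumerating $\S(\I)$ and evaluating $f$ on every stable matching completes in polynomial time.

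Finally, for the concrete list of $\NP$-hard problems, I would check individually that each fits the framework. \myproblem{Sex-Equal}, \myproblem{Balanced}, \myproblem{Egalitarian Stable Roommates} and \myproblem{Maximum-Weight Stable Roommates} are straightforward minimisation/maximisation problems over $\S(\I)$ whose objectives are computable in polynomial time for any given matching, so the argument above applies directly. The only mildly subtle case is \myproblem{(Generalized) Median Stable Matching}, which is not literally an optimisation of a function over $\S(\I)$; as indicated by the footnote in the statement, however, a (generalised) median matching can still be extracted from $\S(\I)$ in $|\S(\I)|\cdot O(|\I|)$ time, and hence inherits the same complexity bounds. The main subtle point of the whole argument is precisely this: we never need to find a small edge, vertex, or swap modulator, because the bound on $|\S(\I)|$ alone, together with the enumeration algorithm acting on $\I$ itself, already yields the claimed running times.
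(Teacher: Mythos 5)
Your proposal matches the paper's own argument: the theorem is derived exactly as you describe, by combining the bounds on $|\S(\I)|$ from Lemma~\ref{lem:SM-enum-edge}, Corollary~\ref{lem:SM-enum-swap}, and Lemma~\ref{lem:SM-enum-vert} with the Gusfield--Irving enumeration algorithm, and the paper explicitly stresses the same key point you highlight, namely that no edge, vertex, or swap modulator ever needs to be computed. The handling of the listed $\NP$-hard problems, including the footnoted treatment of \myproblem{(Generalized) Median Stable Matching}, is likewise identical, so the proposal is correct and essentially the paper's proof.
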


\subsubsection{Tightness of our bounds on the number of stable matchings.}
\label{sec:tightness}

The bounds stated in Lemmas~\ref{lem:SM-enum-edge} and~\ref{lem:SM-enum-vert} are tight in the following sense. 
First, the tightness of Lemma~\ref{lem:SM-enum-edge} is shown by the existence of a strict preference system~$\I_k$ 
for any $k \in \mathbb{N}$ such that $\distedge(\I_k)=k$ and $\I_k$ admits $2^k$ stable matchings. 
A simple example for such an instance~$\I_k$ is obtained by taking $k$ disjoint copies of 
a preference system whose underlying graph is a cycle of length four, with cyclic preferences.

Second, to show the asymptotic tightness of Lemma~\ref{lem:SM-enum-vert}, one can construct a strict preference system~$\mathcal{J}_{k,n}$ 
for any $k,n \in \mathbb{N}$ with $n \geq k$ such that the following properties are satisfied: 
${\distvert(\J_{k,n})=k}$, the number of vertices in~$\mathcal{J}_{k,n}$ is $2n$, 
and $\J_{k,n}$ admits $\binom{n}{k}$ stable matchings. Such a family of preference systems is presented in Example~\ref{ex:1}.

\begin{example}
\label{ex:1}
Let the vertex set of~$\mathcal{J}_{k,n}$ be  $V=A \cup B \cup S$ 
where $A=\{a_1,\dots, a_n\}$, $B=\{ b_1, \dots, b_{n-k}\}$, and $S= \{s_1,\dots,s_k\}$; note $|V|=2n$.
Let $a_i \in A$ be adjacent to all vertices of~$S$, and let it be adjacent to some~$b_j \in B$ exactly if $i-k \leq j \leq i$;
we denote by $E$ the set of edges thus defined, and we let $G=(V,E)$ be the graph underlying $\mathcal{J}_{k,n}$.
See Figure~\ref{fig:ex1} for an illustration.
The preferences of the vertices are as follows: 
\begin{itemize}
\item any vertex~$b \in B$ prefers $a_j$ to~$a_i$ for some $a_j,a_i \in N_G(b)$ exactly if $j>i$;
\item any vertex $s \in S$ prefers $a_j$ to~$a_i$ for some $a_j,a_i \in N_G(s)$ exactly if $j<i$;
\item the preference list of some $a_i \in A$ is $b_i,s_1,b_{i-1},s_2, \dots, b_{i-k+1},s_k,b_{i-k}$; if some of these vertices are not defined, they are omitted from the list.
\end{itemize}
Observe that deleting the vertices of~$S$ from~$\mathcal{J}_{k,n}$ yields an instance that admits a master list, and moreover, 
deleting less than~$k$ vertices cannot result in an instance in~$\FamML$, since any vertex of~$S$ and any vertex of~$B$ disagree on the
order of vertices in~$A$. Hence, $\distvert(\mathcal{J}_{k,n})=k$. 

\begin{figure}[t]
\label{fig:ex1}
\begin{center}
\includegraphics[scale=1]{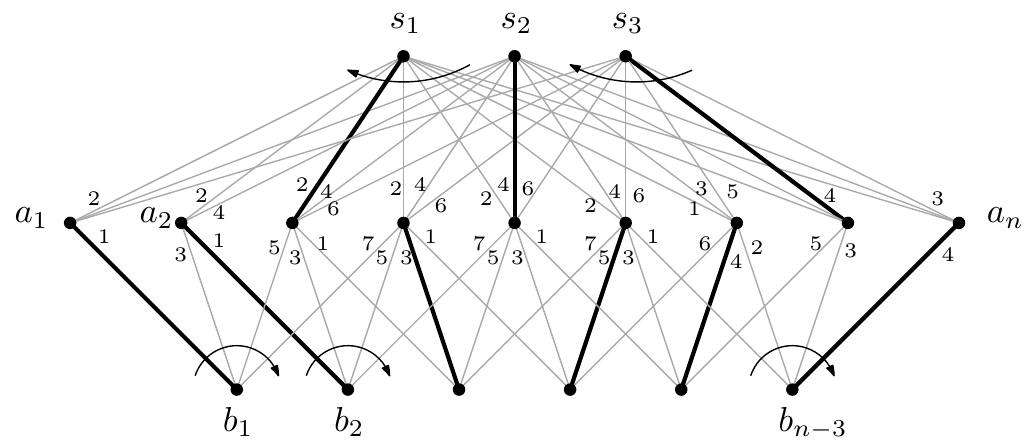}
\end{center}
\caption{Illustration of the preference system~$\mathcal{J}_{k,n}$ for $k=3$. Edges of the underlying graph are shown in grey, while edges of a stable matching are shown in black (and bold). Preferences are depicted either by numbers determining the rank of the given edge, or by an arrow, pointing from less-preferred edges towards more-preferred ones. }
\end{figure}

We claim that $\mathcal{J}_{k,n}$ admits $\binom{n}{k}$ stable matchings. 
Let $A'=\{a_{i_1},\dots, a_{i_k}\}$ be any subset of $k$  vertices from~$A$. 
We define a matching $M_{A'}$ as follows. First, let~$M_{A'}$ contain the edges $s_j a_{i_j}$ for each $j \in [k]$.
Second, let $M_{A'}$ match some~$b_j$ with $a_{j+\ell}$ if $\ell$ is the number of indices in $\{i_1,\dots,i_k\}$ smaller or equal to~$j$. 
It is straightforward to verify that $M_{A'}$ is stable, since no vertex of~$A$ can be part of a blocking edge. 
Since stable matchings in $\mathcal{J}_{k,n}$ are complete, any stable matching~$M$ must be of the form $M_{A'}$ for some $A' \subseteq A$ 
(namely, for ${A'=\{M(s):s \in S\}}$), 
so the number of stable matchings in $\mathcal{J}_{k,n}$ is exactly $\binom{n}{k}$.
\end{example}

We remark that we were not able to construct a preference system that, for a given~$k$, is at $\distswap$-distance~$k$ from 
admitting a master list, and contains $2^k$ stable matchings. Recall the preference system~$\I_k$ that served as an example to 
prove the tightness of Lemma~\ref{lem:SM-enum-edge} for the $\distedge$-distance, 
containing $k$ copies of a four-cycle with cyclic preferences. It is not hard to see that $\distswap(\I_k)=2k$, 
implying that the number of stable matchings in a strict preference system~$\I$ can be as large as $2^{\distswap(\I)/2}$. 
We leave it as an open question to determine the exact bound on the maximum number of 
stable matchings in a strict preference system~$\I$ as a function of~$\distswap(\I)$.

\subsection{Maximum-utility popular matchings with instability costs}
\label{sec:applic-pop}

We now turn our attention to the \MUPMU{} problem, studied in~\cite{CS-unstable-pop-arxiv}: 
given a strict preference system~$\I=(G,\prec)$, a utility function $\omega: E(G) \rightarrow \mathbb{N}$, 
a cost function $c: E(G) \rightarrow \mathbb{N}$,  an objective value $t \in  \mathbb{N}$ and a budget $\beta \in  \mathbb{N}$,
the task is to find a popular matching in~$\I$ whose utility is at least~$t$ and whose
blocking edges have total cost at most~$\beta$.
Our aim is to investigate whether we can solve this problem efficiently for instances that are close to admitting a master list.
 
Note that in general this problem is computationally hard even if the given preference system is strict, bipartite, admits a master list, 
and the cost and utility functions are very simple.  
Namely, given a strict, bipartite preference system $(G,\prec) \in \FamML$ for which a stable matching has size $|V(G)|/2-1$,
it is $\mathsf{NP}$-hard and $\mathsf{W}[1]$-hard with parameter~$\beta$ to find a complete popular matching 
(i.e., one that is larger than a stable matching) 
that admits at most~$\beta$ blocking edges~\cite{CS-unstable-pop-arxiv}. 
Nevertheless, if the total cost $\beta$ of the blocking edges that we allow is a constant and each edge has cost at least~1, 
then \MUPMU{} can be solved in polynomial time for bipartite, strict preference systems that admit a master list 
(in fact, it suffices to assume that the preferences of all vertices on one side of the bipartite 
input graph are consistent with a master list), 
representing an island of tractability for this otherwise extremely hard problem~\cite{CS-unstable-pop-arxiv}.
Therefore, it is natural to ask whether we can extend this result for strict preferences systems that are 
close to admitting a master list. Theorem~\ref{thm:MUPMU-almost-masterlist} answers this question affirmatively.

\begin{theorem}
\label{thm:MUPMU-almost-masterlist}
Let $\I$ be a strict (but not necessarily bipartite) preference system with $G=(V,E)$.
Then an instance $(\I,\omega,c,t,\beta)$ of \MUPMU{} where $c(e) \geq 1$ for all edges $e \in E$, and $\beta$ is constant
can be solved 
\begin{itemize}
\item[(i)] \vspace{-4pt} in FPT time with parameter~$\distedge(\I)$ or~$\distswap(\I)$;
\item[(ii)] in polynomial time if $\distvert(\I)$ is constant.
\end{itemize}
\end{theorem}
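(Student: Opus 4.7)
The plan is to combine three ingredients: the polynomial-time algorithm of~\cite{CS-unstable-pop-arxiv} that solves \MUPMU{} on preference systems in~$\FamML$ when~$\beta$ is constant; the efficient computation of a small modulator to~$\FamML$; and the enumeration Lemmas~\ref{lem:enum-fixedblocks-edge} and~\ref{lem:enum-fixedblocks-vert} of this section, which bound and list the matchings whose blocking edges form a prescribed set.

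First, I would compute a small modulator. For part~(i) with parameter~$\distedge(\I)$, Theorem~\ref{thm:MLED-2approx-FPT} yields an edge set~$S \subseteq E$ of size at most~$2\distedge(\I)$ with $\I - S \in \FamML$ in FPT time; for parameter~$\distswap(\I)$, Corollary~\ref{cor:MLS-FPT} produces a swap modulator, which via Lemma~\ref{prop:comparing-modulators} can be converted into an edge modulator of size at most~$\distswap(\I)$. For part~(ii), since $\distvert(\I)$ is a constant~$k$, brute-forcing over all $\binom{|V|}{k}$ vertex subsets of size~$k$ locates a suitable modulator~$W$ in polynomial time.

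Next, I would exploit the fact that $c(e) \geq 1$ together with $c(\bp(M)) \leq \beta$ force any feasible matching~$M$ to contain at most~$\beta$ blocking edges. Since $\beta$ is constant, there are only $O(|E|^{\beta})$ candidate blocking sets $B \subseteq E$ of size at most~$\beta$. For each such~$B$, I would invoke Lemma~\ref{lem:enum-fixedblocks-edge} (or Lemma~\ref{lem:enum-fixedblocks-vert} for part~(ii)) to list all matchings~$M$ in~$\I$ with $\bp(M) = B$; by analogy with Lemmas~\ref{lem:SM-enum-edge} and~\ref{lem:SM-enum-vert}, the number of such matchings is at most~$2^{O(|S|)}$ for the edge/swap modulator and at most~$|V|^{O(|W|)}$ for the vertex modulator, and they can be listed within the same running time. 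Each enumerated matching~$M$ is then tested in polynomial time for $\omega(M) \geq t$ and for popularity in~$\I$ (popularity is checkable in polynomial time via standard LP-duality for popular matchings); I would return any candidate that passes all tests and report infeasibility otherwise. Correctness holds because every feasible solution appears in the enumeration when~$B$ is chosen to equal its blocking set.

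The main obstacle lies in Lemmas~\ref{lem:enum-fixedblocks-edge} and~\ref{lem:enum-fixedblocks-vert} themselves: one has to show that once the blocking set~$B$ and the interaction with the modulator are fixed, the remainder of the matching is determined up to an exponentially-bounded (in the modulator size) number of choices. The argument should extend the reasoning behind Lemmas~\ref{lem:SM-enum-edge} and~\ref{lem:SM-enum-vert} from the stable ($B = \emptyset$) case to ``$B$-stable'' matchings, exploiting the fact that outside the modulator the preference digraph of $\I - S$ is acyclic, so match decisions propagate deterministically once the behaviour on the modulator is pinned down. Some care is needed to verify that popularity of~$M$ in the original~$\I$ reduces cleanly to the local constraints imposed by~$B$ together with the fixed modulator configuration. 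Once these lemmas are in place, the overall running time is FPT in $\distedge(\I)$ or $\distswap(\I)$ for part~(i) and $|V|^{O(\distvert(\I))}$ for part~(ii), as claimed.
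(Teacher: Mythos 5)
Your proposal is correct and follows essentially the same route as the paper: compute a modulator (the 2-approximation of Theorem~\ref{thm:MLED-2approx-FPT} for $\distedge$, Corollary~\ref{cor:MLS-FPT} converted via Lemma~\ref{prop:comparing-modulators} for $\distswap$, brute force for constant $\distvert$), use $c(e)\geq 1$ and constant~$\beta$ to enumerate the $O(|E|^{\beta})$ candidate blocking sets, and for each apply Lemma~\ref{lem:enum-fixedblocks-edge} or~\ref{lem:enum-fixedblocks-vert} followed by polynomial-time utility and popularity checks. The only cosmetic difference is that you inline what the paper factors out as Lemma~\ref{lem:MUPMU-almost-masterlist}, and the opening reference to the algorithm of~\cite{CS-unstable-pop-arxiv} is not actually needed.
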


We apply the same approach as in Section~\ref{sec:enum-bounds},
with a crucial difference: for the algorithms proving Theorem~\ref{thm:MUPMU-almost-masterlist} 
we will need to determine a set of edges or vertices whose deletion yields an instance in~$\FamML$.
Using such a set, we then apply Lemma~\ref{lem:enum-fixedblocks-edge} or~\ref{lem:enum-fixedblocks-vert} below; 
these are generalizations of Lemmas~\ref{lem:SM-enum-edge} and~\ref{lem:SM-enum-vert} 
for the case when we allow a fixed set of edges to block the desired matching. 

\begin{mylemma}
\label{lem:enum-fixedblocks-edge}
Given a strict preference system~$\I=(G,\prec)$ with $G=(V,E)$ and edge sets $B \subseteq E$ and $S \subseteq E$
such that $\I-S \in \FamML$, the number of matchings~$M$ for which $B=\bp(M)$ is at most $2^{|S|}$, 
and it is possible to enumerate them in time~$2^{|S|} \cdot O(|E|)$. 
\end{mylemma}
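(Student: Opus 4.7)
The plan is to show that for each subset $T \subseteq S$ there is at most one matching $M$ in $\I$ with $M \cap S = T$ and $\bp(M) = B$. Since there are $2^{|S|}$ choices of $T$, and each candidate can be constructed or refuted in $O(|E|)$ time, this yields both the upper bound on the count and the enumeration procedure.

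Fix $T \subseteq S$ and decompose any such matching as $M = T \cup M'$, where $M'$ uses only edges of $E \setminus S$ between vertices of $V \setminus V(T)$, and is therefore a matching in the sub-instance $\I' := \I - S - V(T)$. Since $\I - S \in \FamML$ and $\FamML$ is closed under taking subgraphs, $\I'$ admits a master list $\prec^\ml$. Because the endpoints of any edge $e$ of $\I'$ have the same $M$-partner as their $M'$-partner, and their preferences on the relevant vertices agree in $\I$ and in $\I'$, such an edge blocks $M$ in $\I$ if and only if it blocks $M'$ in $\I'$. Thus the constraint $\bp(M) = B$ in $\I$ forces $M'$ to have blocking set exactly $B' := B \cap E(\I')$ in $\I'$.

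The crux is therefore to show that, in a master-list instance such as $\I'$ and for any target blocking set $B'$, there is at most one matching $M'$ whose blocking edges are exactly $B'$, and that it can be built in $O(|E|)$ time. I propose to construct $M'$ greedily by processing vertices in decreasing master-list order: for the top still-unmatched vertex $v$, scan its not-yet-matched neighbors in $\I'$ in order of $v$'s preference, and match $v$ to the first neighbor $u$ with $\{v, u\} \notin B'$, or leave $v$ unmatched if no such $u$ exists. Uniqueness should follow by induction on the processing order: assuming the greedy's decisions coincide with those of some hypothetical matching $M^*$ with blocking set $B'$ on all earlier vertices, one argues that any deviation at $v$ would either place an edge of $B'$ inside $M^*$ (which cannot then block) or create a blocking pair outside $B'$. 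The key geometric fact is that because $v$ is currently the top unmatched vertex, every still-available neighbor $u$ of $v$ must, under $M^*$, be either unmatched or matched below $v$ in the master list, so $u$ prefers $v$ to its $M^*$-partner.

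After running the greedy to produce $M'$, we form $M = T \cup M'$ and verify $\bp(M) = B$ directly in $\I$, outputting $M$ only if the check passes; both steps run in $O(|E|)$ time. The main obstacle I expect is formalising the forcing argument cleanly in every sub-case, in particular handling the situation where a scanned neighbor $u$ of $v$ lies above $v$ in the master list: such a $u$ is available at step $v$ only because it was previously left unmatched by the greedy, and one must argue inductively that this configuration is uniquely compatible with the blocking-set constraint so that the greedy's decision at $v$ remains forced.
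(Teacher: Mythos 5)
Your overall scheme is the same as the paper's: guess $T=M\cap S$ (at most $2^{|S|}$ choices), observe that $M'=M\setminus T$ lives in the master-list residual instance $\I'$, argue that $M'$ is uniquely determined, reconstruct it in $O(|E|)$ time, and confirm the candidate by a final check of $\bp(M)=B$ in $\I$. The decomposition and the blocking-equivalence between $\I$ and $\I'$ are argued correctly.

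The one place where your write-up is not finished is exactly the crux you flag: the claim that a master-list instance has at most one matching with prescribed blocking set $B'$, which you try to establish by a bespoke forcing induction for your greedy. That claim is true, and the clean way to close your gap is the observation the paper uses: since blocking edges never belong to the matching itself, any matching of $\I'$ all of whose blocking edges lie in $B'$ is \emph{stable} in $\I'-B'$ (deleting edges outside the matching creates no new blocking pairs), and $\I'-B'$ still admits a master list, hence has a \emph{unique} stable matching, computable in $O(|E|)$ time by the standard greedy (the paper deletes $B$ together with $S$ and $V(M_S)$ up front and invokes this uniqueness, citing Irving et al., with a footnote covering the non-bipartite case). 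Note that your greedy on $\I'$ that skips $B'$-edges is literally that standard greedy run on $\I'-B'$, so completeness of your procedure follows at once, and your troublesome sub-case evaporates: if a neighbor $u$ above $v$ was processed earlier and left unmatched, then $v$ was already unmatched at that moment, so $\{u,v\}\in B'$ must hold and $v$ skips $u$ anyway. With that substitution (or with the inductive forcing argument actually carried out, which is essentially a re-proof of the uniqueness of stable matchings under master lists), your proof is correct and matches the paper's bound and running time.
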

\begin{proof}
We propose an algorithm that enumerates all matchings~$M$ of the input instance~$\I$ for which $\bp(M)=B$ 
by using a bounded search tree approach.
Let~$M$ be a hypothetical stable matching in~$\I$. 
The algorithm tries all possible ways to find~$M$, thus enumerating all stable matchings, as follows. 
See Algorithm~\ref{alg:enum-edge} for a pseudocode. 

First, the algorithm ``guesses'' the set $M_S=M \cap S$. 
Then it creates a preference system~$\I'$ by deleting the edges of~$B \cup S$ and also the vertices in~$V(M_S)$ from~$\I$;
note that $\I'$ admits a master list. 
Therefore, there exists a unique stable matching~$M'$ in~$\I'$ which can be found in time~$O(|E|)$~\cite{IMS08}\footnote{%
In fact, Irving et al.~\cite{IMS08} only considered the bipartite case, but it is easy to see that the straightforward algorithm 
where vertices pick their partners one-by-one according to their order on the master list, each choosing their favorite among those still available, results in the unique stable matching, regardless of whether the underlying graph is bipartite or not.
}.
The algorithm then checks if $\bp(M_S \cup M')=B$ holds in~$\I$ (this can be performed in~$O(|E|)$ time as well), 
and if so, outputs it. 
Observe that the algorithm outputs at most $2^{|S|}$ matchings, and has running time $2^{|S|} \cdot O(|E|)$, 
because there are $2^{|S|}$ possible ways to choose~$M_S$.

\begin{algorithm}
\caption{Enumerating all matchings in a preference system~$\I$  that admit a fixed set $B$ of
blocking edges,
given a set $S$ of edges in~$\I$ such that $\I-S \in \FamML$.}
\label{alg:enum-edge}
\begin{algorithmic}[1]
\ForAll{$M_S \subseteq S$} { 
	\State Let $\I'=\I-(S \cup B)-V(M_S)$. \Comment{Note $\I' \in \FamML$.}
	\State Compute the unique stable matching~$M'$ in~$\I'$.
	\State Let $M=M_S \cup M'$.
	\If{$\bp(M)=B$ in~$\I$} {\bf output} $M$.
	\EndIf
	}
\EndFor
\end{algorithmic}
\end{algorithm}

It is clear that any matching output by the algorithm satisfies $\bp(M)=B$ in~$\I$. 
It remains to prove that it enumerates all such matchings in~$\I$. So let $M$ be any matching of~$\I$ for which $\bp(M)=B$.
Consider the iteration corresponding to choosing $M_S=M \cap S$ on line~1 of Algorithm~\ref{alg:enum-edge}.
Note that $M \setminus M_S$ is a matching that is present in~\hbox{$\I'=\I-(S \cup B) -V(M_S)$}, 
and we show that it is stable in~$\I'$ as well. 
Note that by definition, $M$ is stable in~$\I-B$, and also in $\I-(B \cup (S \setminus M))$.
Furthermore, deleting the edges of $M_S \subseteq M$ together with their endpoints yields a stable matching in the remaining instance,
and so~$M \setminus M_S$ is also stable in $\I-(B \cup (S \setminus M))-V(M_S)=\I'$.
Since there is only one stable matching in~$\I'$, we obtain that $M \setminus M_S$ must be the matching~$M'$ 
found at line~3 (within the iteration corresponding to choosing~$M_S$ on line~1). 
Hence, the algorithm creates the matching $M_S \cup (M \setminus M_S)=M$ on the next line, and since $\bp(M)=B$ holds in~$\I$, 
the algorithm outputs it on line~5.
\qed
\end{proof}

\begin{mylemma}
\label{lem:enum-fixedblocks-vert}
Given a strict preference system~$\I=(G,\prec)$ with $G=(V,E)$, an edge set $B \subseteq E$, and a vertex set $S \subseteq V$
such that $\I-S \in \FamML$, the number of matchings~$M$ for which $B=\bp(M)$ is at most $|V|^{|S|}$, 
and it is possible to enumerate them in time~$|V|^{|S|} \cdot O(|E|)$. 
\end{mylemma}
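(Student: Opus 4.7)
The plan is to adapt Algorithm~\ref{alg:enum-edge} to the vertex setting by branching on the match of each $s\in S$ in the target matching instead of branching on which edges of $S$ belong to it. Concretely, for a hypothetical matching $M$ with $\bp(M)=B$, for each $s\in S$ we guess $M(s)$: either one of the at most $|V|-1$ neighbors of $s$, or a marker indicating $s$ is unmatched. These guesses collectively determine the set $M_S$ of $M$-edges incident to~$S$; guesses that are not consistent (two $s,s'\in S$ claiming the same partner) are discarded. This yields at most $|V|^{|S|}$ branches.

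For each branch we set $\I'=\I-B-(S\cup V(M_S))$. Since $\I-S\in\FamML$ and $\FamML$ is closed under taking subgraphs (both edge and vertex deletions), $\I'\in\FamML$. Hence $\I'$ has a unique stable matching $M'$ which, as in Lemma~\ref{lem:enum-fixedblocks-edge}, can be found in $O(|E|)$ time using the master list of~$\I'$ via the procedure of Irving et al.~\cite{IMS08} (this procedure is insensitive to bipartiteness, since each vertex picks its favorite still-available neighbor in master-list order). The algorithm then forms $M=M_S\cup M'$, verifies in $O(|E|)$ time that $\bp(M)=B$ in $\I$, and outputs $M$ if so. The overall running time is $|V|^{|S|}\cdot O(|E|)$ as claimed, and the number of matchings output is trivially bounded by the number of branches.

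For completeness, let $M$ be any matching of $\I$ with $\bp(M)=B$, and consider the branch in which the guesses agree with $M$. Then by construction, $M\setminus M_S$ consists exactly of the $M$-edges both of whose endpoints lie outside $S$; moreover no such endpoint can lie in $V(M_S)\setminus S$, since its $M$-partner in that case would be in $S$, contradicting that the edge is not in $M_S$. Thus $M\setminus M_S$ is a valid matching in~$\I'$. If some edge $\{a,b\}$ of $\I'$ blocked $M\setminus M_S$, then by the same reasoning $a$ and $b$ would have the same $M$-partners (or be unmatched) in both $\I$ and $\I'$, so $\{a,b\}$ would block $M$ in $\I$; this forces $\{a,b\}\in B$, but $B$ has been removed from $\I'$, a contradiction. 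Hence $M\setminus M_S$ is stable in $\I'$ and must coincide with the unique stable matching $M'$, so the algorithm reconstructs $M=M_S\cup M'$ and outputs it.

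The main obstacle is precisely the preservation-of-stability argument in the previous paragraph: one needs to verify that deleting the guessed vertices $S$ together with their $M$-partners $V(M_S)\setminus S$ neither destroys any edge of $M\setminus M_S$ nor creates a new blocking edge out of one that was previously dominated by~$B$. Everything else is a routine generalization of the proof of Lemma~\ref{lem:enum-fixedblocks-edge}.
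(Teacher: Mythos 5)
Your proposal is correct and follows essentially the same route as the paper's proof: guessing the partners of the vertices in $S$ (equivalently, the sub-matching $M_S$ of edges meeting $S$), deleting $B$ and $S\cup V(M_S)$, computing the unique stable matching of the resulting master-list instance, and verifying $\bp(M_S\cup M')=B$. Your per-vertex branching and your explicit check that $M\setminus M_S$ remains a matching and stays stable in $\I'$ are just more detailed phrasings of the paper's argument, so no substantive difference or gap remains.
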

\begin{proof}
We propose an algorithm that enumerates all matchings~$M$ of the input instance~$\I$ for which $\bp(M)=B$ 
by using a bounded search tree approach.
Let~$M$ be a hypothetical stable matching in~$\I$. 
The algorithm tries all possible ways to find~$M$, thus enumerating all stable matchings, as follows. 
See Algorithm~\ref{alg:enum-vert} for a pseudocode. 

First, the algorithm ``guesses'' the set $M_S \subseteq E$ of those edges in~$M$ that have at least one endpoint in~$S$. 
Then it creates a preference system~$\I'$ by deleting the edges of~$B$ and the vertices in~$V(M_S) \cup S$ from~$\I$;
note that $\I'$ admits a master list. 
Therefore, there exists a unique stable matching~$M'$ in~$\I'$ which can be found in time~$O(|E|)$~\cite{IMS08}.
The algorithm then checks if $\bp(M_S \cup M')=B$ holds in~$\I$ (this can be performed in~$O(|E|)$ time as well), 
and if so, outputs it. 
Observe that the algorithm outputs at most $|V|^{|S|}$ matchings, and has running time $|V|^{|S|} \cdot O(|E|)$, 
because there are at most $|V|^{|S|}$ possible ways to choose~$M_S$.

\begin{algorithm}
\caption{Enumerating all matchings in a preference system~$\I$ that admit a fixed set $B$ of
blocking edges, given a set $S$ of vertices in~$\I$ such that~$\I-S \in \FamML$.}
\label{alg:enum-vert}
\begin{algorithmic}[1]
\ForAll{matching $M_S$ in~$\I$ with each edge of~$M_S$ having an endpoint in~$S$} { 
	\State Let $\I'=\I-B - (S \cup V(M_S))$. \Comment{Note $\I' \in \FamML$.}
	\State Compute the unique stable matching~$M'$ in~$\I'$.
	\State Let $M=M_S \cup M'$.
	\If{$\bp(M)=B$ in~$\I$} {\bf output} $M$.
	\EndIf
	}
\EndFor
\end{algorithmic}
\end{algorithm}

It is clear that any matching output by the algorithm satisfies $\bp(M)=B$ in~$\I$. 
It remains to prove that it enumerates all such matchings in~$\I$. So let $M$ be any matching of~$\I$ for which $\bp(M)=B$.
Consider the iteration corresponding to choosing $M_S= \{e: e \in M, e \textrm{ has an endpoint in } S \}$ 
on line~1 of Algorithm~\ref{alg:enum-vert}.
Note that $M \setminus M_S$ is a matching that is present in~\hbox{$\I'=\I-B-S-V(M_S)$}, and we show 
that it is stable in~$\I'$ as well.
Note that $M$ is stable in~$\I-B$ by definition.
Furthermore, deleting the edges of~$M_S \subseteq M$ together with their endpoints
and the vertices of $S \setminus V(M_S)$, left unmatched by~$M$, 
yields a stable matching in the remaining instance, 
and so $M \setminus M_S$ is stable in~$\I-B-(V(M_S) \cup S)=\I'$.
Since there is only one stable matching in~$\I'$, we obtain that $M \setminus M_S$ must be the matching~$M'$ 
found at line~3 (within the iteration corresponding to choosing~$M_S$ on line~1). 
Hence, the algorithm creates the matching $M_S \cup (M \setminus M_S)=M$ on the next line, and since $\bp(M)=B$ holds in~$\I$, 
the algorithm outputs it on line~5.
\qed
\end{proof}

We now present our key lemma that, combined with Theorem~\ref{thm:MLED-2approx-FPT}, 
is the basis for Theorem~\ref{thm:MUPMU-almost-masterlist}.

\begin{mylemma}
\label{lem:MUPMU-almost-masterlist}
An instance $(\I,\omega,c,t,\beta)$ of \MUPMU{} where $\I=(G,\prec)$ is a strict preference system with underlying graph $G=(V,E)$
and $c(e) \geq 1$ for all edges $e \in E$, 
can be solved 
\begin{itemize}
\item[(i)] in~$2^{|S|} \cdot O(|E|^{\beta+1}\sqrt{|V|} \log |V|)$ time,
assuming that a set $S \subseteq E$ of edges such that $\I-S \in \FamML$ is provided as part of the input; \vspace{4pt}
\item[(ii)] in~$|V|^{|S|} \cdot O(|E|^{\beta+1}\sqrt{|V|} \log |V|)$ time,
assuming that a set $S \subseteq V$ of vertices such that $\I-S \in \FamML$ is provided as part of the input.
\end{itemize}
\end{mylemma}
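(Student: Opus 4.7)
The plan is to combine a brute-force enumeration of candidate blocking-edge sets with the matching enumeration procedures from Lemmas~\ref{lem:enum-fixedblocks-edge} and~\ref{lem:enum-fixedblocks-vert}, followed by a polynomial-time popularity check for each enumerated matching.

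Observe first that since $c(e) \geq 1$ for every $e \in E$ and we have a budget of $\beta$ for blocking-edge cost, any feasible matching $M$ must satisfy $|\bp(M)| \leq c(\bp(M)) \leq \beta$. Thus it suffices to iterate over all subsets $B \subseteq E$ with $|B| \leq \beta$ as the hypothetical value of $\bp(M)$; there are at most $O(|E|^\beta)$ such sets, and we immediately discard any with $c(B) > \beta$. For each remaining candidate $B$, we invoke Lemma~\ref{lem:enum-fixedblocks-edge} in case~(i), or Lemma~\ref{lem:enum-fixedblocks-vert} in case~(ii), to enumerate every matching $M$ of $\I$ satisfying $\bp(M) = B$. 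This yields at most $2^{|S|}$ (resp.~$|V|^{|S|}$) candidate matchings per set $B$, produced in $2^{|S|} \cdot O(|E|)$ (resp.~$|V|^{|S|} \cdot O(|E|)$) time. For each such matching~$M$ we then verify in $O(|E|)$ time that $\omega(M) \geq t$, and we test whether $M$ is popular in~$\I$ in $O(\sqrt{|V|}\,|E| \log |V|)$ time via the known characterization of popularity as maximum weight under a suitable $M$-dependent edge-weighting (which applies to arbitrary strict preference systems, not only bipartite ones). Among all surviving matchings we return one of maximum utility, or report infeasibility if no such matching exists.

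Correctness follows because every feasible popular matching $M^*$ is considered inside the outer iteration corresponding to the (unique) set $B = \bp(M^*)$: since $\bp(M^*)$ has cost at most $\beta$ and hence size at most $\beta$, it is among the enumerated candidate sets, and by Lemma~\ref{lem:enum-fixedblocks-edge} or Lemma~\ref{lem:enum-fixedblocks-vert} the matching $M^*$ is then produced by the inner enumeration. The total running time is the product of $O(|E|^\beta)$ outer iterations, at most $2^{|S|}$ (resp.~$|V|^{|S|}$) inner candidates, and an $O(\sqrt{|V|}\,|E| \log |V|)$ popularity check per candidate, which matches the stated bounds. The main obstacle I foresee is pinning down an explicit polynomial-time popularity test with the claimed running time for (possibly nonbipartite) strict preference systems; the argument has to invoke a known polynomial-time popularity characterization from the roommate-instance literature and verify that it can be executed within the target $O(\sqrt{|V|}\,|E| \log |V|)$ budget, so that the overall running time collapses to the product form stated in the lemma.
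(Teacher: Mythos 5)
Your proposal is correct and follows essentially the same route as the paper's proof: guess the blocking-edge set $B$ of size (and cost) at most $\beta$, enumerate all matchings $M$ with $\bp(M)=B$ via Lemma~\ref{lem:enum-fixedblocks-edge} or Lemma~\ref{lem:enum-fixedblocks-vert}, and then check utility and popularity for each candidate, which gives exactly the stated running times. The one point you flag as an obstacle---a popularity test in $O(|E|\sqrt{|V|}\log|V|)$ time for not necessarily bipartite strict instances---is indeed available in the literature and is precisely what the paper invokes by citation, so no gap remains.
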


\begin{proof}
Since all edges have cost at least~$1$, we know that the desired matching may admit at most~$\beta$ blocking edges. 
For each possible edge set $B \subseteq E$ of size at most~$\beta$ whose cost does not exceed the budget~$\beta$, 
the algorithm of Lemma~\ref{lem:enum-fixedblocks-edge} can  in~$2^{|S|} \cdot O(|E|)$ time enumerate all 
matchings~$M$ with $\bp(M)=B$ in case (i), and 
the algorithm of Lemma~\ref{lem:enum-fixedblocks-vert} can in~$|V|^{|S|} \cdot O(|E|)$ time enumerate all 
matchings~$M$ with $\bp(M)=B$ in case (ii).
For each such matching, we can check in linear time whether it has total utility at least~$t$, 
and we can test in $O(|E| \sqrt{|V|} \log |V|)$ time whether it is popular~\cite{BIM10,DPS18}. 
\qed
\end{proof}

We are now ready to prove Theorem~\ref{thm:MUPMU-almost-masterlist}.

\begin{proof}[of Theorem~\ref{thm:MUPMU-almost-masterlist}]
Let our input instance of \MUPMU{} be $\J=(\I,\omega,c,t,\beta)$
where $\I=(G,\prec)$ is a strict preference system with $G=(V,E)$
and $c(e) \geq 1$ for all edges $e \in E$.

First, suppose that $\distvert(\I)$ is a constant. 
Then we find a set~$S \subseteq V$ of vertices with $|S| =\distvert(\I)$ 
whose deletion yields a preference system admitting a master list;
this can be performed in $|V|^{\distvert(\I)}=|V|^{O(1)}$ time by trying all possible vertex sets of size at most~$\distvert(\I)$. 
Then we apply Lemma~\ref{lem:MUPMU-almost-masterlist} to solve our instance~$\J$, using the set~$S$. 
Note that the total running time is indeed polynomial, 
assuming that $\distvert(\I)$ as well as our budget~$\beta$ is constant.

Second, assume that our parameter is $\distedge(\I)$. To obtain an FPT algorithm with respect to~$\distedge(\I)$, 
we first use the 2-approximation of Theorem~\ref{thm:MLED-2approx-FPT}
to determine a set~$S \subseteq E$ of edges with $|S| \leq 2 \distedge(\I)$ 
whose deletion yields a preference system admitting a master list.
Given the set~$S$, we again apply Lemma~\ref{lem:MUPMU-almost-masterlist} to solve our instance~$\J$.
Observe that the total running time now is indeed FPT with parameter~$\distedge(\I)$, assuming that $\beta$ is constant.

Since $\distswap(\I) \geq \distedge(\I)$, an FPT algorithm with parameter~$\distedge(\I)$ immediately yields 
an FPT algorithm with parameter~$\distswap(\I)$. A more direct approach is to use Corollary~\ref{cor:MLS-FPT} 
to directly determine a set~$S$ of swaps of minimum cardinality for which $I \lhd S \in \FamML$, 
transform~$S$ into a set~$S'$ of edges with size~$|S'| \leq |S|$ such that  $\I-S \in \FamML$,
and then apply Lemma~\ref{lem:MUPMU-almost-masterlist} to solve~$\J$ using the edge set~$S'$.
\qed
\end{proof}

\section{Summary and further research}
\label{sec:conclusion}
We summarize our main results on MLS, MLED, and MLVD in Table~\ref{tab:summary}.
Interestingly, all our hardness results hold for strict preference systems, and
we were able to extend all our positive results for preference systems with weak orders.

\begin{table}[t]
\caption{Summary of our results on MLS, MLED, and MLVD. Results marked by the sign~$\dagger$ assume the Unique Games Conjecture.}
\label{tab:summary}
\centering
\begin{tabular}{l@{\hspace{4pt}}c@{\hspace{4pt}}c}
 problem &  parameterized complexity & approximation\\
\noalign{\hrule}
MLS & FPT wrt~$k$ (Cor.~\ref{cor:MLS-FPT}, Thm.~\ref{cor:MLS-ties-FPT}) & constant-factor approx. is $\mathsf{NP}$-hard 
(Thm.~\ref{thm:MLS-MLED-NPhard})$^{\dagger}$ \\[2pt]
MLED  & $\mathsf{W}[1]$-hard wrt~$k$ (Thm.~\ref{thm:MLED-W1hard}) & constant-factor approx. is $\mathsf{NP}$-hard 
(Thm.~\ref{thm:MLS-MLED-NPhard})$^{\dagger}$  \\
& & 2-approx. FPT alg wrt $k$ 
(Thms.~\ref{thm:MLED-2approx-FPT},~\ref{thm:MLED-ties-2approx-FPT}) \\[2pt]
MLVD & $\mathsf{W}[2]$-hard wrt~$k$ (Thm.~\ref{thm:MLVD-NPhard}) & $f(k)$-approx. is $\mathsf{W}[1]$-hard wrt $k$ (Thm.~\ref{thm:MLVD-NPhard})
\end{tabular}
\end{table}

There are a few questions left open in the paper. 
We gave asymptotically tight bounds on the maximum number of stable matchings in a strict preference
system $\I$ as a function of $\distedge(\I)$ and $\distvert(\I)$, but we were not able to do the same for $\distswap(\I)$.
Another question is whether the approximation factor of our 2-approximation FPT algorithm for MLED can be improved.

Apart from answering these specific questions, there are several possibilities for future research.
One direction would be to identify further problems that can be solved efficiently on preference systems 
that are close to admitting a master list. 
Also, it would be interesting to see how these measures vary in different real-world scenarios, 
and to find those practical applications where preference profiles are usually close to admitting a master list. 

\subsubsection{Acknowledgments.} 
We would like to thank the reviewers of WALCOM~2023 for their helpful comments.
This research was supported by the Hungarian Academy of Sciences under its Momentum Programme (LP2021-2) and the Hungarian Scientific Research Fund (OTKA grants~K128611 and K124171). 
\bibliographystyle{abbrv}
\bibliography{ml-long}

\end{document}